\DeclareSIUnit{\belmilliwatt}{Bm}
\DeclareSIUnit{\dBm}{\deci\belmilliwatt}
\DeclareSIUnit{\bps}{bps}
\renewcommand{\qedsymbol}{$\blacksquare$}
\newcommand{\minimize}{\mathop{\hbox{minimize}}}
\newcommand{\maximize}{\mathop{\hbox{maximize}}}
\newcommand{\argmax}{\mathop{\hbox{argmax}}}
\newcommand{\subjto}{\mathop{\hbox{subject to}}}
\newtheorem{theorem}{Theorem}
\theoremstyle{definition}
\newtheorem{definition}{Definition}
\theoremstyle{remark}
\newcommand{\bp}{\ensuremath{{\mathbf p}}}
\newcommand{\bq}{\ensuremath{{\mathbf q}}}
\newcommand{\blambda}{\ensuremath{{\bm \lambdaup}}}
\newcommand{\bv}{\ensuremath{{\mathbf v}}}
\newcommand{\barbp}{\ensuremath{{\bar \bp}}}
\newcommand{\barbq}{\ensuremath{{\bar \bq}}}
\newcommand{\calN}{\ensuremath{{\mathcal N}}}
\newcommand{\calP}{\ensuremath{{\mathcal P}}}
\newcommand{\calQ}{\ensuremath{{\mathcal Q}}}
\newcommand{\calS}{\ensuremath{{\mathcal S}}}
\newcommand{\Pmax}{\ensuremath{{P_{\textnormal{max}}}}}
\newcommand{\Problem}{\ensuremath{{\textnormal{(P)}}}}
\newcommand{\Dual}{\ensuremath{{\textnormal{(D)}}}}
\newcommand{\Dualh}{\ensuremath{{\textnormal{(D$^{\bh}$)}}}}
\newcommand{\Qone}{\ensuremath{{\textnormal{(Q$_1$)}}}}
\newcommand{\Qtwo}{\ensuremath{{\textnormal{(Q$_2$)}}}}
\newcommand{\Qthree}{\ensuremath{{\textnormal{(Q$_3$)}}}}
\newcommand{\Qfour}{\ensuremath{{\textnormal{(Q$_4$)}}}}
\newcommand{\Qfive}{\ensuremath{{\textnormal{(Q$_5$)}}}}
\newcommand{\Ptwo}{\ensuremath{{\textnormal{(P$_2$)}}}}
\newcommand{\Ptwop}{\ensuremath{{\textnormal{(P$_2'$)}}}}
\newcommand{\bh}{\ensuremath{{\mathbf h}}}
\newcommand{\bR}{\ensuremath{{\mathbf R}}}
\newcommand{\E}{\ensuremath{{\mathbb E}}}
\begin{document}

%\title{Opportunistic User and Power Scheduling for Sum Rate Maximization in Fading Channels\\
\title{Low-Complexity Joint User and Power Scheduling for Downlink NOMA over Fading Channels\\
%\title{Opportunistic User Scheduling and Power Allocation for Downlink NOMA over Fading Channels\\
%\thanks{This research was supported by the U.S. National Science Foundation under Grants OAC-1541105 and OAC-1638283.}
%\thanks{Due to the page limit, proofs are omitted in this paper. Please refer to the long version of this paper on Arxiv~\cite{fullVer}.}
%\thanks{This work was supported in part by blah-blah-blah, by blah-blah-blah, and by blah-blah-blah.}
%\thanks{This research was supported by Unmanned Vehicles Advanced Core Technology Research and Development Program through the National Research Foundation of Korea (NRF), Unmanned Vehicle Advanced Research Center (UVARC) funded by the Ministry of Science and ICT, the Republic of Korea (NRF-2018M1B3A1A02937507).}
}

%\author{\IEEEauthorblockN{Do-Yup Kim}
%\IEEEauthorblockA{Department of Electrical and\\Electronic Engineering\\
%Yonsei University, Seoul, South Korea\\
%Email: danny.doyup.kim@yonsei.ac.kr}
%\and
%\IEEEauthorblockN{Hamid Jafarkhani}
%\IEEEauthorblockA{Department of Electrical Engineering\\and Computer Science\\
%University of California, Irvine, CA, USA\\
%Email: hamidj@uci.edu}
%\and
%\IEEEauthorblockN{Jang-Won Lee}
%\IEEEauthorblockA{Department of Electrical and\\Electronic Engineering\\
%Yonsei University, Seoul, South Korea\\
%Email: jangwon@yonsei.ac.kr}
%}

%\author{\IEEEauthorblockN{Do-Yup Kim}
%\IEEEauthorblockA{Department of Electrical and\\
%Electronic Engineering\\
%Yonsei University\\
%Seoul 03722, Republic of Korea\\
%Email: danny.doyup.kim@yonsei.ac.kr}
%\and
%\IEEEauthorblockN{Hamid Jafarkhani}
%\IEEEauthorblockA{Department of Electrical Engineering\\
%and Computer Science\\
%University of California\\
%Irvine, CA 92697, United States\\
%Email: hamidj@uci.edu}
%\and
%\IEEEauthorblockN{Jang-Won Lee\\
%\IEEEauthorblockA{Department of Electrical and\\
%Electronic Engineering\\
%Yonsei University\\
%Seoul 03722, Republic of Korea\\
%Email: jangwon@yonsei.ac.kr}}}

\author{\IEEEauthorblockN{Do-Yup Kim\IEEEauthorrefmark{1}, Hamid Jafarkhani\IEEEauthorrefmark{2}, Jang-Won Lee\IEEEauthorrefmark{1}}
\IEEEauthorblockA{\IEEEauthorrefmark{1}Department of Electrical and Electronic Engineering, Yonsei University, Seoul, Korea}
\IEEEauthorblockA{\IEEEauthorrefmark{2}Center for Pervasive Communications and Computing, University of California, Irvine, CA, USA}
\IEEEauthorblockA{Email: danny.doyup.kim@yonsei.ac.kr, hamidj@uci.edu, jangwon@yonsei.ac.kr}}

\maketitle

\begin{abstract}
Non-orthogonal multiple access (NOMA) has been considered one of the most promising radio access techniques for next-generation cellular networks.
In this paper, we study the joint user and power scheduling for downlink NOMA over fading channels.
Specifically, we focus on a stochastic optimization problem to maximize the weighted average sum rate while ensuring given minimum average data rates of users.
To address this problem, we first develop an opportunistic user and power scheduling algorithm (OUPS) based on the duality and stochastic optimization theory.
By OUPS, the stochastic problem is transformed into a series of deterministic ones for the instantaneous weighted sum rate maximization for each slot.
%As a good feature, 
%XXXXX, OUPS is an online algorithm that can make decisions without requiring information on the underlying distribution of the fading channels, making it effectively applicable to various practical applications.
Thus, we additionally develop a heuristic algorithm with very low computational complexity, called user selection and power allocation algorithm (USPA), for the instantaneous weighted sum rate maximization problem.
Via simulation results, we demonstrate that USPA provides near-optimal performance with very low computational complexity, and OUPS well guarantees given minimum average data rates.
\end{abstract}

\begin{IEEEkeywords}
Fading channels, non-orthogonal multiple access (NOMA), opportunistic scheduling, power allocation, user scheduling.
\end{IEEEkeywords}

\section{Introduction}

%Non-orthogonal multiple access (NOMA) is promising as a multiple access technique for beyond 5G cellular networks due to its high spectral efficiency and user connectivity~\cite{ding2017survey}.
%Although various NOMA techniques have been proposed thus far, this paper focuses on power-domain NOMA~\cite{islam2017power}.
%In power-domain NOMA, superposition coding (SPC) and successive interference cancellation (SIC) are employed at the transmitter and receiver sides, respectively.
%With SPC and SIC, power domain diversity gain can be achieved by appropriate user scheduling and power allocation.
%Hence, user scheduling and power allocation have become important research issues.

Non-orthogonal multiple access (NOMA) is envisioned as a promising technology for future cellular networks due to its advantage of achieving high spectral efficiency over the conventional orthogonal multiple access (OMA) techniques~\cite{ding2017survey}.
Although various NOMA techniques have been proposed thus far, this paper focuses on power-domain NOMA~\cite{islam2017power}.
%Non-orthogonal multiple access (NOMA) is promising as a multiple access technique for beyond 5G cellular networks due to its high spectral efficiency and user connectivity~\cite{ding2017survey}.
In power-domain NOMA, superposition coding (SPC) and successive interference cancellation (SIC) are employed at the transmitter and receiver sides, respectively.
%Based on SPC and SIC, power domain diversity gain can be achieved by appropriate user scheduling and power allocation.
Based on SPC and SIC, power domain diversity gain can be achieved by appropriate user selection and power allocation, which are therefore very important research topics.
%Thus, user scheduling and power allocation have become important research issues.

%There are many studies on either user selection, power allocation, or both in NOMA systems~\cite{fu2019joint, nain2018low, di2016sub, yang2017optimality, cui2018optimal, seo2019high}.
%To be specific, in~\cite{fu2019joint}, the authors have studied the sum rate maximization problem, and developed a heuristic algorithm for joint user selection and power allocation based on the water-filling algorithm.
%On the other hand, the weighted sum rate maximization problem has been studied in~\cite{nain2018low, di2016sub}.
%In~\cite{nain2018low}, the authors have developed a heuristic user selection and power allocation algorithm that iteratively decides whether to select one or both for every two consecutive users based on their weights and channel gains.
%In~\cite{di2016sub}, the authors have developed a power allocation algorithm based on geometric programming (GP) and a user selection algorithm based on a method of alternately updating user selection and power allocation using a matching game.
%Although the (weighted) sum rate maximization has been studied in the above studies~\cite{fu2019joint, nain2018low, di2016sub}, quality of service (QoS) constraints have not been considered.

There are many studies on user selection and/or power allocation in NOMA systems~\cite{fu2019joint, nain2018low, di2016sub, seo2019high, yang2017optimality, cui2018optimal}.
Specifically, in~\cite{fu2019joint}, a heuristic algorithm for joint user selection and power allocation to maximize the sum rate has been developed.
On the other hand, the weighted sum rate maximization has been studied in~\cite{nain2018low, di2016sub}.
In~\cite{nain2018low}, the authors have developed a heuristic user selection and power allocation algorithm that iteratively decides whether to select one or both for every two users.
% based on their weights and channel gains.
In~\cite{di2016sub}, the authors have developed an optimal power allocation algorithm based on geometric programming (GP) and a user selection algorithm.
Then, they have alternately updates user selection and power allocation based on the matching game.
In~\cite{seo2019high}, a NOMA-based mixture transceiver architecture that applies SPC and SIC to each group and opportunistically increases the multiplexing gain while providing full diversity order is proposed.
Despite the achievement of high throughput in \cite{fu2019joint, nain2018low, di2016sub, seo2019high}, quality of service (QoS) constraints have not been considered.

%In~\cite{yang2017optimality, cui2018optimal, seo2019high}, studies have been conducted taking into account minimum data rate requirements as QoS constraints.
%%In~\cite{yang2017optimality, cui2018optimal, seo2019high}, the authors have studied the sum rate maximization problem while considering minimum data rate requirements as QoS constraints.
%To be specific, in~\cite{yang2017optimality}, the authors have proven that the power allocation problem for the sum rate maximization is a convex problem that can be solved by standard algorithms for convex optimization.
%In~\cite{cui2018optimal}, the authors have developed an optimal user selection and power allocation algorithm for the sum rate maximization based on the exhaustive search and the branch-and-bound approach.
%However, due to its impractical computational complexity, they have proposed another heuristic algorithm with the aid of matching theory and successive convex approximation.
%In~\cite{seo2019high}, the authors have proposed a NOMA-based mixture transceiver architecture that applies SPC and SIC to each group and opportunistically increases the multiplexing gain while providing full diversity order based on channel-adaptive user grouping and proper power allocation.

In~\cite{yang2017optimality, cui2018optimal}, minimum data rate requirements are considered as the QoS constraints.
Specifically, in~\cite{yang2017optimality}, the authors have proven that the power allocation problem for the sum rate maximization is a convex problem that can be solved by standard algorithms for convex optimization.
In~\cite{cui2018optimal}, the authors have developed an optimal user selection and power allocation algorithm for the sum rate maximization based on the exhaustive search and branch-and-bound approaches.
However, due to its prohibitive computational complexity, they have proposed another heuristic algorithm using matching theory and successive convex approximation.

Despite extensive studies on user selection and/or power allocation in NOMA systems, there are still some limitations.
First, most of the previous studies, including~\cite{fu2019joint, nain2018low, di2016sub, yang2017optimality, cui2018optimal, seo2019high}, have focused on optimization problems from a snapshot perspective.
That is, user selection has been concentrated, rather than user scheduling, under fixed channel conditions while considering instantaneous QoS constraints.
In practical systems over time-varying fading channels, a significant drop in overall performance can occur since such instantaneous QoS constraints should be always satisfied even for users with very low channel gains.
However, such instantaneous QoS constraints are not essential in most applications.
In addition, although various problems have been studied, including~\cite{fu2019joint, nain2018low, di2016sub, yang2017optimality, cui2018optimal, seo2019high}, the weighted sum rate maximization problem with QoS constraints, which is NP-hard, has not been well studied.
In particular, most of the algorithms for maximizing the weighted sum rate have very high computational complexity.

Motivated by the above observations, in this paper, we develop a joint user and power scheduling algorithm with very low computational complexity in downlink NOMA over fading channels to maximize the weighted average sum rate while ensuring given minimum average data rates of users.
To this end, we first develop an opportunistic user and power scheduling algorithm (OUPS) based on the duality and stochastic optimization theory.
As a merit, OUPS is an online algorithm that can make decisions with only the instantaneous channel status without requiring knowledge of the underlying distribution of the fading channels, making it effectively applicable to various practical applications.
However, OUPS necessitates solving the user selection and power allocation problem for maximizing the instantaneous weighted sum rate at every slot.
%, which is also known as a challenging nonconvex mixed-integer optimization problem.
Although its optimal solution can be achieved by the algorithm developed in~\cite{di2016sub}, the algorithm is too complex to be executed at every slot.
Hence, we additionally develop a very simple heuristic algorithm, called user selection and power allocation algorithm (USPA), which has extremely low computational complexity.
%provides near-optimal performance notwithstanding very low computational complexity.
%Finally, via simulation results, we verify that USPA provides near-optimal performance, and then show that OUPS with USPA guarantees given minimum average data rates of all users.
Finally, through the simulation results, we verify that USPA provides near-optimal performance notwithstanding very low computational complexity, and show that OUPS with USPA well guarantees given minimum average data rates of all users.

%In the rest of this paper, we present the system model in Section \ref{sec:sys} and the stochastic optimization problem and OUPS to solve it in Section \ref{sec:OUPS}., USPA in Section \ref{sec:USPA}, simulation results in Section~\ref{sec:sim}, and the conclusion in Section~\ref{sec:conc}.

The rest of the paper is organized as follows.
In Section~\ref{sec:sys}, we present the system model.
In Section~\ref{sec:OUPS}, we formulate a stochastic user and power scheduling problem and develop OUPS to solve it.
In Section~\ref{sec:USPA}, we develop USPA with very low computational complexity, which is exploited at every slot in OUPS.
Simulation results are presented in Section~\ref{sec:sim}, followed by the conclusion in Section~\ref{sec:conc}.

%\textit{Notation:} Scalars, vectors, and sets are denoted by italic, boldface, and calligraphic letters, respectively.
%$\E[\cdot]$ denotes a statistical expectation operator.
%For a set~$\calS$, $(a_i)_{\forall i\in\calS}$ denotes a vector that consists of elements in the set $\{a_i : i\in\calS\}$.
%For a complex number~$a$, $\lvert a \rvert$ denotes its absolute value.
%For a real-valued vector~$\ba$, $[\ba]^+$ is a vector each of whose elements is chosen as the maximum value among zero and itself.

\subsubsection*{Notation}
Scalars, vectors, and sets are denoted by italic, boldface, and calligraphic letters, respectively.
$\E[\cdot]$ denotes the statistical expectation operator.
$\mathbb{R}_{\ge0}$ denotes the set of nonnegative real numbers.
For a set~$\calS$, $(a_i)_{\forall i\in\calS}$ denotes a vector that consists of elements in the set $\{a_i : i\in\calS\}$.
For a complex number~$a$, $\lvert a \rvert$ denotes its absolute value.

%\textit{Notation:} $\E[\cdot]$ denotes the expectation, $\mathbb{R}_{\ge0}$ denotes the set of nonnegative real numbers, and for a set~$\calS$, $(a_i)_{\forall i\in\calS}$ denotes a vector that consists of elements in the set $\{a_i : i\in\calS\}$.

\section{System Model}
\label{sec:sys}
We focus on the downlink of a single-cell in the NOMA system, where one single-antenna base station (BS) with maximum transmission power of~$\Pmax$ transmits data to a set of $N$ single-antenna users, denoted by $\calN=\{1, 2, \ldots, N\}$.
We assume a time-slotted system over block fading channels, in which channel gains vary from one slot to another but remain constant during a slot.
Let $\{h_i^t, t=1,2,\ldots\}$ be the fading process associated with User~$i$, where $h_i^t$ is a complex-valued continuous random variable representing the channel gain from the BS to User~$i$ in slot~$t$.
The fading process is assumed to be stationary and ergodic.
%In succession, we denote a channel vector in slot~$t$ by $\bh^t=(h_i^t)_{i\in\calN}$ and its support by $\calH$.
We assume that information on the underlying distributions of the fading process is unknown to the BS due to the difficulty of obtaining such information a priori.
%since it is practically difficult to obtain such information a priori.
However, we assume that instantaneous channel gains are known to the BS at the beginning of each slot.

In NOMA, multiple users can be simultaneously scheduled with different levels of power in the same slot.
Let $x_i^t$, satisfying $\E[\lvert{x_i^t}\rvert^2]=1$, be the information-bearing signal transmitted to User~$i$ in slot~$t$, and $p_i^t$ be the power allocated to signal $x_i^t$.
Also, let $q_i^t$ be the user selection indicator whose value is $1$ if User~$i$ is selected in slot~$t$ and $0$ otherwise.
Then, the received signal at User~$i$ in slot~$t$ is given by
\begin{equation}\label{eq:y_i}
	y_i^t = h_i^t \sum_{i\in\calN} q_i^t \sqrt{p_i^t}\,x_i^t + n_i^t,
\end{equation}
%where $n_i^t$ is the additive zero-mean Gaussian noise with variance~$\sigma_i^2$, i.e., $n_i^t \sim \mathcal{CN} (0,\sigma_i^2)$.
%For ease description, we define the noise-to-channel ratio (NCR) of User~$i$ in slot~$t$ as
%\begin{equation}
%	\eta_i = \frac{\sigma_i^2}{\vert h_i^t\rvert^2}.
%\end{equation}
where $n_i^t \sim \mathcal{CN} (0,\sigma_i^2)$ is the zero-mean Gaussian noise with variance~$\sigma_i^2$.
For ease description, we define the noise-to-channel ratio (NCR) of User~$i$ in slot~$t$ as
\begin{equation}
	\eta_i = \frac{\sigma_i^2}{\vert h_i^t\rvert^2}.
\end{equation}

After receiving signal $y_i^t$, User~$i$ performs SIC to decode its own signal, $x_i^t$, from it.
User~$i$ first decodes the signals for each User~$j$ with $\eta_j \ge \eta_i$, and then subtracts the components associated with them from the received signal.
In succession, User~$i$ decodes its own signal by treating the signals for users whose NCRs are smaller than its NCR as noise.
With a typical assumption that SIC can be successfully done, the maximum achievable data rate of User~$i$ in slot~$t$ is obtained as
\begin{equation}\label{eq:Rate}
	R_i(\bp^t, \bq^t; \bh^t) = q_i^t \log_2 \left( 1 + \frac{p_i^t}{\sum_{j\in\calN:\eta_j^t<\eta_i^t}p_j^t + \eta_i^t} \right),
\end{equation}
where $\bp^t = (p_i^t)_{\forall i\in\calN}$, $\bq^t = (q_i^t)_{\forall i\in\calN}$, and $\bh^t=(h_i^t)_{\forall i\in\calN}$.

\section{Opportunistic User and Power Scheduling}
\label{sec:OUPS}
We formulate a joint user and power scheduling problem, where the objective is to find optimal user selection and power allocation for each slot to maximize the weighted average sum rate while ensuring minimum average data rates of users as
\begin{IEEEeqnarray}{c'l}
	\IEEEyesnumber\IEEEyessubnumber*
	\maximize_{\bp^t,\,\bq^t,\,\forall t} & \lim_{T\to\infty} \frac{1}{T} \sum_{t=1}^{T} \sum_{i\in\calN} w_i R_i(\bp^t, \bq^t; \bh^t) \label{Prob:Obj}\\
	\subjto
	&\lim_{T\to\infty} \frac{1}{T} \sum_{t=1}^{T} R_i(\bp^t, \bq^t; \bh^t) \ge \bar{R}_i,~\forall i\in\calN,\IEEEeqnarraynumspace \label{Prob:minR}\\
	&\bp^t \in \calP, ~ \bq^t \in \calQ, ~ \forall t, \label{Prob:p,q}
\end{IEEEeqnarray}
where $w_i$ and  $\bar{R}_i$ are the weight and minimum average data rate requirement of User~$i$, respectively, $\calP=\{\bp^t\in\mathbb{R}_{\ge0}^N \mid \sum_{i\in\calN} p_i^t \le \Pmax\}$, and $\calQ=\{0,1\}^N$.
%$\calP = \{(v_i)_{i\in\calN} : \sum_{i\in\calN} v_i \le \Pmax, v_i \ge 0, \forall i\in\calN \}$, and $\calQ = \{(v_i)_{i\in\calN} : v_i\in\{0,1\}, \forall i\in\calN\}$.
The constraints~\eqref{Prob:minR} and~\eqref{Prob:p,q} represent the minimum average data rate requirements of users and the ranges of power allocation and user selection vectors, respectively.
%The constraint~\eqref{Prob:minR} ensures the minimum average data rate requirement of each user, and the constraint~\eqref{Prob:p,q} represents the restrictions on power allocation and user selection.
Note that the problem is not easy to solve due to the average over an infinite time horizon.
To address this problem, we take advantage of the fact that the long-term time average converges almost surely to the expectation for almost all realizations of the fading process by the ergodicity of the fading process.
Thereby, by denoting a channel vector in a generic slot by~$\bh$ without $t$, we can reformulate the problem~as
%Thereby, by denoting a channel vector in a generic slot by~$\bh$ without $t$, we can reformulate the above problem as
\begin{IEEEeqnarray}{c'c'l}
	\IEEEyesnumber\IEEEyessubnumber*
	\Problem & \maximize_{\bp^\bh,\,\bq^\bh,\,\forall \bh} & \E_{\bh} \left[ \sum_{i\in\calN} w_i R_i(\bp^{\bh}, \bq^{\bh} ; \bh) \right] \label{Prob1:Obj}\\
	&\subjto
	&\E_{\bh} \left[ R_i(\bp^{\bh}, \bq^{\bh} ; \bh) \right] \ge \bar{R}_i,~\forall i\in\calN,\IEEEeqnarraynumspace \label{Prob1:minR}\\
	&&\bp^{\bh} \in \calP, ~ \bq^{\bh} \in \calQ, ~ \forall \bh, \label{Prob1:p,q}
\end{IEEEeqnarray}
where $\bp^{\bh}=(p_i^{\bh})_{\forall i\in\calN}$ and $\bq^{\bh}=(q_i^{\bh})_{\forall i\in\calN}$.
For any slot with $\bh$, user selection and power allocation can be done according to the solution for $\bq^{\bh}$ and $\bp^{\bh}$ obtained by solving Problem~$\Problem$.

However, there is still a big challenge in solving Problem~$\Problem$.
That is, since no information on the underlying distribution of $\bh$ is provided, we need to solve the problem without such information.
To resolve the challenge, we leverage the duality and stochastic optimization theory as in~\cite{lee2016qc2linq, lee2019ehlinq}.
Accordingly, we first define a Lagrangian function, $L$, for Problem~$\Problem$ as
\begin{align}\label{eq:lagrangian}
	L(\barbp, \barbq, \blambda) &= \E_{\bh} \left[ \sum_{i\in\calN} w_i R_i(\bp^{\bh}, \bq^{\bh} ; \bh) \right] \nonumber\\
	&\quad + \sum_{i\in\calN} \lambda_i \left(\E_{\bh}\left[ R_i(\bp^{\bh}, \bq^{\bh} ; \bh) \right] - \bar{R}_i\right) \nonumber\\
	&= \E_{\bh} \left[ \sum_{i\in\calN} (w_i+\lambda_i) R_i(\bp^{\bh}, \bq^{\bh} ; \bh) \right] - \sum_{i\in\calN} \lambda_i \bar{R}_i,
\end{align}
%\begin{equation}\label{eq:lagrangian}
%	L(\barbp, \barbq, \blambda) = \E_{\bh} \left[ \sum_{i\in\calN} (w_i+\lambda_i) R_i(\bp^{\bh}, \bq^{\bh} ; \bh) \right] - \sum_{i\in\calN} \lambda_i \bar{R}_i,
%\end{equation}
where $\barbp = (\bp^{\bh})_{\forall\bh}$, $\barbq = (\bq^{\bh})_{\forall\bh}$, and $\blambda=(\lambda_i)_{\forall i\in\calN}$ is a nonnegative Lagrangian multiplier vector corresponding to the constraint~\eqref{Prob1:minR}.
Using~\eqref{eq:lagrangian}, we can define the dual problem as
\begin{IEEEeqnarray}{c'c'l}\label{Prob:Dual}
	\Dual & \minimize_{\blambda} & F(\blambda) \nonumber \\
	&\subjto
	&\blambda \succeq \mathbf{0}, \nonumber
\end{IEEEeqnarray}
where $\succeq$ is the elementwise inequality, $\mathbf{0}$ is a zero vector, and 
\begin{IEEEeqnarray}{rCc'l}
	\IEEEyesnumber\label{Prob:F}
	F(\blambda) & = & \maximize_{\barbp, \, \barbq} & L(\barbp, \barbq, \blambda) \\
	&&\subjto
	&\bp^{\bh} \in \calP, ~ \bq^{\bh} \in \calQ, ~ \forall \bh. \nonumber
\end{IEEEeqnarray}
Since Problem~$\Problem$ is nonconvex, there may be a duality gap even if Problem~$\Dual$ is optimally solved.
%However, in the presence of channel fading with continuous underlying distribution, the duality gap vanishes, resulting in no loss of optimality.
%%Note that even if problem~$\Dual$ is optimally solved, the duality gap is generally nonzero due to the lack of convexity of Problem~$\Qone$.
%%For deterministic channels, Problem~$\Qone$ is known to be NP-hard.
%%However, in the presence of channel fading whose channel cumulative distribution (CDF) is continuous, the duality gap vanishes, resulting in no loss of optimality.
%We state this result in the following theorem.
However, the duality gap vanishes in our problem, resulting in no loss of optimality.

%However, as shown in the following theorem, the duality gap vanishes in the presence of fading, resulting in no loss of optimality.

%However, as shown in the following theorem, zero duality gap can be achieved as long as the channel probability distribution contains no point of positive probability.under mild conditions that permit their solution in the dual domain.
%Thus, we discuss the duality gap in the following theorem.

\begin{theorem}\label{thm:zero-duality-gap}
The strong duality holds between Problem~$\Problem$ and its dual problem, Problem~$\Dual$.
%That is, the duality gap between Problem~$\Problem$ and its dual problem~$\Dual$ is zero.
%That is, letting $\barbp^*$ and $\barbq^*$ be the optimal solution to Problem~$\Qone$ and $\blambda^*$ be the optimal solution to its dual problem~$\Dual$, respectively, we have
%\begin{equation}
%f(\barbp^*, \barbq^*) = F(\blambda^*).
%\end{equation}
\end{theorem}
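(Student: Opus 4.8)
The standard tool for establishing zero duality gap in nonconvex stochastic problems of this type is the so-called "time-sharing" or "infinite-dimensionality" argument (Yu–Lui / Luo–Zhang), which exploits the fact that the optimization is over a continuum of fading states rather than a single deterministic instance. The plan is to show that the \emph{perturbation function} associated with Problem~\Problem{} is concave, which immediately yields strong duality.

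\textbf{Step 1 (Perturbation function).} For a perturbation vector $\bm{\epsilon}=(\epsilon_i)_{\forall i\in\calN}$, define
\[
  v(\bm{\epsilon}) = \max_{\barbp,\barbq}\ \E_{\bh}\!\left[\sum_{i\in\calN} w_i R_i(\bp^{\bh},\bq^{\bh};\bh)\right]
  \quad\text{s.t.}\quad \E_{\bh}\!\left[R_i(\bp^{\bh},\bq^{\bh};\bh)\right]\ge \bar R_i+\epsilon_i,\ \bp^{\bh}\in\calP,\ \bq^{\bh}\in\calQ\ \forall\bh,
\]
so that $v(\mathbf{0})$ is the optimal value of Problem~\Problem. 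Strong duality follows if $v$ is concave on its domain (plus a mild Slater-type interior-point condition so that the supporting hyperplane at $\mathbf{0}$ has finite slope, i.e.\ $\blambda\succeq\mathbf 0$).

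\textbf{Step 2 (Concavity via time-sharing).} Take two perturbation levels $\bm{\epsilon}^{(1)},\bm{\epsilon}^{(2)}$ with (near-)optimal policies $(\barbp^{(1)},\barbq^{(1)})$ and $(\barbp^{(2)},\barbq^{(2)})$, and fix $\theta\in[0,1]$. Because the fading variable $\bh$ is a \emph{continuous} random variable, one can partition the channel-state space into two measurable regions of probability $\theta$ and $1-\theta$ that are "statistically representative" — more precisely, one splits the expectation by introducing an independent auxiliary randomization (or, since each $h_i^t$ is continuous, by a measure-preserving relabeling argument) so that applying policy~$1$ on the first region and policy~$2$ on the second yields a feasible policy whose objective value equals $\theta\, v(\bm{\epsilon}^{(1)})+(1-\theta)\,v(\bm{\epsilon}^{(2)})$ and whose average rate for each user is at least $\theta(\bar R_i+\epsilon_i^{(1)})+(1-\theta)(\bar R_i+\epsilon_i^{(2)}) = \bar R_i + (\theta\bm{\epsilon}^{(1)}+(1-\theta)\bm{\epsilon}^{(2)})_i$. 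Hence $v(\theta\bm{\epsilon}^{(1)}+(1-\theta)\bm{\epsilon}^{(2)})\ge \theta v(\bm{\epsilon}^{(1)})+(1-\theta)v(\bm{\epsilon}^{(2)})$, i.e.\ $v$ is concave. Strong duality between \Problem{} and \Dual{} then follows from convex-analysis duality for the perturbed family, together with a brief check that a strictly feasible point exists (e.g.\ allocating full power to a single best user in each slot gives strictly positive average rates when the $\bar R_i$ are chosen in the feasible regime).

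\textbf{Main obstacle.} The delicate point is Step~2: one must rigorously construct the "time-shared" policy. The cleanest route is to enlarge the probability space with an independent uniform random variable $U$ and let the scheduler use policy~$1$ when $U\le\theta$ and policy~$2$ otherwise; since the objective and constraints in \Problem{} depend on the policy only through expectations over $\bh$, and $U$ is independent of $\bh$, feasibility and the value identity are immediate. The subtlety is then arguing that this randomized policy can be \emph{derandomized} into one of the original form $\bq^{\bh},\bp^{\bh}$ — here the continuity of $\bh$ is essential, because it lets us "absorb" the randomization bit $U$ into an arbitrarily fine measurable partition of $\bh$-space without changing any expectation, so no genuine extra randomness is needed. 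I would make this precise with a standard Lyapunov-type / nonatomic-measure argument (the distribution of $\bh$ is nonatomic, so its pushforward supports partitions of every prescribed probability). Everything else — writing the Lagrangian, identifying $F(\blambda)$ as the per-state inner maximization, and noting that the inner problem \eqref{Prob:F} decomposes slot-by-slot into the instantaneous weighted-sum-rate problem — is routine and already set up in the excerpt.
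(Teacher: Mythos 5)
Your proof is correct and rests on the same key lemma as the paper's: the time-sharing condition of Yu and Lui, which is precisely the statement that your perturbation function $v$ is concave in the rate-requirement vector. Where you differ is in how the time-shared policy is constructed. The paper builds $\{\barbp_z,\barbq_z\}$ by switching along the time axis --- policy $x$ for slots $t\le\lfloor\theta T\rfloor$ and policy $y$ afterwards --- and then invokes ergodicity to convert the resulting time averages into $\theta$ times the expectation under $x$ plus $(1-\theta)$ times the expectation under $y$. You instead stay entirely within the expectation formulation \Problem{} and split the probability space: an independent uniform variable selects between the two policies, and the nonatomicity of the continuous fading distribution lets you derandomize, via a Lyapunov-convexity argument on the vector of the $N+1$ relevant expectations, into a genuine stationary policy $(\bp^{\bh},\bq^{\bh})$. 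Your route is slightly heavier (the derandomization step is only sketched and would need to be written out), but it is arguably cleaner for \Problem{} as stated, since the paper's construction yields a time-varying schedule that is not literally of the form $(\bp^{\bh})_{\forall\bh}$ and implicitly leans on the equivalence between the time-average and expectation formulations. You also make explicit the Slater-type regularity needed for the supporting hyperplane at $\bm{\epsilon}=\mathbf{0}$ to have finite slope, a point the paper leaves implicit. Both arguments establish the theorem.
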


\begin{proof}
%Due to the page limit, we leave the proof in~\cite{kim2020joint}.
%Due to the page limit, please see the full version of this paper on the arXiv repository~\cite{kim2020joint}.
%See~\cite{fullVer}.\footnote{Due to the page limit, proofs are omitted in this paper. Please refer to the full version of this paper~\cite{fullVer} for more details.}
See Appendix~\ref{prove:thm:zero-duality-gap}.
%See~\cite{}
\end{proof}

%\noindent From this result, we deduce that in the presence of channel fading, i.e., in most cases, OUPS that will be developed in this paper can be used to obtain a good approximation to the optimal solution.
%\noindent By Theorem~\ref{thm:zero-duality-gap}, we deduce that OUPS, an algorithm to be developed subsequently to solve the dual problem~$\Dual$, can be used to obtain the optimal solution to problem~$\Problem$.

We thus develop an algorithm that solves Problem~$\Dual$,
To this end, we first focus on finding its objective function, $F(\blambda)$.
%That is, we first solve the maximization in~\eqref{Prob:F}.
%We can observe that the second term in~\eqref{eq:lagrangian} is irrelevant in obtaining the solution to the maximization problem in~\eqref{Prob:F} and that the maximization in~\eqref{Prob:F} is separable in terms of each channel vector.
The first term in~\eqref{eq:lagrangian} is separable for each channel vector, and the second term is independent of the decision variables, $\barbp$ and $\barbq$.
%Note that, in~\eqref{eq:lagrangian}, the first term is separable into each channel vector, and the second term is independent of power allocation variable~$\barbp$.
%Hence, for any given Lagrangian multiplier~$\blambda$, the solution to the maximization in~\eqref{Prob:F}, $\barbp^*(\blambda)$, can be obtained by separately solving the following subproblems for each channel vector~$\bh$, defined as
Hence, for any given Lagrangian multiplier vector, $\blambda$, the maximization in~\eqref{Prob:F} can be solved by separately solving the following subproblem for each given channel vector~$\bh$:
%Hence, for any given Lagrangian multiplier~$\blambda$, the solution to the Problem~\eqref{Prob:F} can be found by solving the following subproblem in each slot with system state~$s$ as
%Hence, for any given Lagrangian multiplier~$\blambda$, we can solve the maximization in~\eqref{eq:maxLag} by solving the following subproblem in each slot with system state~$s$ as
\begin{IEEEeqnarray}{c'c'l}\label{Prob:Dual_s}
	\Dualh & \maximize_{\bp^{\bh},\,\bq^{\bh}} & \sum_{i\in\calN} (w_i+\lambda_i) R_i(\bp^{\bh}, \bq^{\bh} ; \bh) \label{Prob:Dual_s:Obj} \nonumber\\
	&\subjto 
	&\bp^{\bh}\in\calP,\,\bq^{\bh}\in\calQ.
\end{IEEEeqnarray}
The expectation has disappeared in Problem~$\Dualh$, so that it can be solved without knowledge of the underlying distribution of the fading process once the channel realization is provided.
Thus, for given $\blambda$ and $\bh$, Problem~$\Dualh$ becomes a deterministic optimization problem for user selection and power allocation that aims to maximize the instantaneous weighted sum rate with weight $w_i+\lambda_i$ for User~$i$.
An algorithm to solve this problem, called USPA, will be developed in the next section.

We now focus back on solving Problem~$\Dual$.
Even though the optimal user selection and power allocation can be obtained for each $\bh$ and $\blambda$ by solving Problem~$\Dualh$, the underlying distribution of $\bh$ is still required to solve Problem~$\Dual$.
%However, such a priori knowledge is unknown in most practical applications.
Nevertheless, thanks to the fact that Problem~$\Dual$ is a convex stochastic programming problem~\cite{shapiro2014lectures}, we can solve it using the stochastic subgradient method, where the Lagrangian multipliers are iteratively updated according to
\begin{equation}\label{eq:update}
	\lambda_i^{t+1} = \max\left\{0, \, \lambda_i^{t} - \zeta^{t} v_i^{t}\right\}, ~ \forall i\in\calN,
\end{equation}
where $\blambda^{t}$ and $\zeta^{t}$ are the Lagrangian multiplier vector and the step size in slot~$t$, respectively, and $\bv^t = (v_i^t)_{\forall i\in\calN}$ is the stochastic subgradient of $F(\blambda)$ with respect to $\blambda$ at $\blambda=\blambda^t$.
By Danskin's min-max theorem~\cite{bertsekas1999nonlinear}, we can determine $\bv^t$ as
\begin{equation}\label{eq:v}
	v_i^t = R_i^t - \bar{R}_i, ~ \forall i \in \calN,
\end{equation}
where $R_i^t$ is the instantaneous data rate of User~$i$ in slot~$t$, which is given by the user selection and power allocation according to the solution to Problem~$\Dualh$ with $\bh=\bh^t$~and~$\blambda=\blambda^t$.
When $\blambda$ follows this update process, it converges almost surely to the optimal solution, $\blambda^*$, to Problem~$\Dual$ if $\zeta^t$ meets~\cite{boyd2008stochastic}
\begin{equation}\label{eq:condition}
\zeta^{t} \ge 0, ~ \sum_{t=1}^{\infty} \zeta^{t} = \infty, ~ \textnormal{and} ~ \sum_{t=1}^{\infty} (\zeta^{t})^2 < \infty.
\end{equation}
The proposed OUPS is outlined in~Algorithm~\ref{Alg:Scheduling}.

\begin{algorithm}[!t]
%\small
%\footnotesize
\DontPrintSemicolon
\SetKwInOut{Input}{input}\SetKwInOut{Output}{output}
Initialize: $\blambda^{0}=\mathbf{0}$, and $t=1$\\
\For{each slot~$t$}{
	Obtain a solution to Problem~$\Dualh$ using USPA.\\
	Transmit a signal based on the obtained solution.\\
	Update $\blambda^{t}$ according to~\eqref{eq:update} and~\eqref{eq:v}, and $t \gets t+1$.
}
\caption{Opportunistic user and power scheduling}
\label{Alg:Scheduling}
\end{algorithm}

\section{User Selection and Power Allocation}
\label{sec:USPA}
In this section, we develop USPA to solve Problem~$\Dualh$.
To alleviate the difficulty of Problem~$\Dualh$ caused by the integer variables, i.e., $\bq^\bh$, we first consider a problem defined as
\begin{IEEEeqnarray}{c'c'l}
	\Qone  & \maximize_{\bp^{\bh}} & \sum_{i\in\calN} \tilde{w}_i R_i (\bp^{\bh};\bh)\nonumber \\
	&\subjto
	&\bp^{\bh}\in\calP,
\end{IEEEeqnarray}
where $\tilde{w}_i = w_i+\lambda_i$, and $R_i(\bp^\bh;\bh)$ is defined as $R_i(\bp^\bh, \bq^\bh; \bh)$ with $q_i^t=1$.
%Although the user selection variable, $\bq^{\bh}$, is excluded from the variables, an optimal solution to Problem~$\Dualh$ can be easily obtained from that to Problem~$\Qone$.
An optimal solution to Problem~$\Dualh$ can be easily obtained from that to Problem~$\Qone$.

\begin{theorem}\label{thm:user_selection_exclusion}
Let $(\bp^{\bh})^{\dagger}$ be an optimal solution to Problem~$\Qone$.
Then, the optimal solution, $\{(\bp^{\bh})^*, (\bq^{\bh})^*\}$, to Problem~$\Dualh$ can be obtained as
\begin{equation}
(\bp^{\bh})^* = (\bp^{\bh})^{\dagger} \text{ and } (\bq^{\bh})^* = (q_i^\bh)_{\forall i\in\calN},
\end{equation}
where for all $i\in\calN$, $q_i^{\bh}$ is $1$ if $(p_i^{\bh})^{\dagger}>0$ and $0$ otherwise.
\end{theorem}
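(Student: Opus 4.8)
The plan is to show that Problem~$\Qone$ is a \emph{lossless} relaxation of Problem~$\Dualh$ and that the proposed construction attains this relaxed optimum. Concretely, I would establish two facts: (i) the optimal value of $\Dualh$ is at most the optimal value of $\Qone$; and (ii) the candidate point $\{(\bp^{\bh})^{\dagger}, (\bq^{\bh})^{*}\}$ is feasible for $\Dualh$ and achieves an objective value equal to the optimal value of $\Qone$. Chaining (i) and (ii) then forces the candidate point to be optimal for $\Dualh$.

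For (i), I would use the pointwise bound on the rate function. For any $\bp^{\bh}\in\calP$ and any $\bq^{\bh}\in\calQ$, and every $i\in\calN$, we have $R_i(\bp^{\bh},\bq^{\bh};\bh) = q_i^{\bh}\, R_i(\bp^{\bh};\bh) \le R_i(\bp^{\bh};\bh)$, because $q_i^{\bh}\in\{0,1\}$ and $R_i(\bp^{\bh};\bh)=\log_2\!\bigl(1+p_i^{\bh}/(\sum_{j:\eta_j<\eta_i}p_j^{\bh}+\eta_i)\bigr)\ge 0$; the key point here is that the SIC interference term $\sum_{j:\eta_j<\eta_i}p_j^{\bh}$ is a function of $\bp^{\bh}$ alone and does not involve $\bq^{\bh}$, so this inequality holds uniformly. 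Since $\tilde{w}_i = w_i+\lambda_i\ge 0$, summing over $i$ gives $\sum_{i\in\calN}\tilde{w}_i R_i(\bp^{\bh},\bq^{\bh};\bh) \le \sum_{i\in\calN}\tilde{w}_i R_i(\bp^{\bh};\bh)$, and taking the supremum over the feasible set $\bp^{\bh}\in\calP$, $\bq^{\bh}\in\calQ$ (whose power constraint is identical to that of $\Qone$) yields (i).

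For (ii), feasibility is immediate since $(\bp^{\bh})^{\dagger}\in\calP$ and $(\bq^{\bh})^{*}\in\calQ$ by construction. To evaluate the objective, I would split the users: if $(p_i^{\bh})^{\dagger}>0$ then $q_i^{\bh}=1$, so $R_i((\bp^{\bh})^{\dagger},(\bq^{\bh})^{*};\bh)=R_i((\bp^{\bh})^{\dagger};\bh)$; if $(p_i^{\bh})^{\dagger}=0$ then the argument of the logarithm equals $1$, hence $R_i((\bp^{\bh})^{\dagger};\bh)=0$, and since $q_i^{\bh}=0$ we also have $R_i((\bp^{\bh})^{\dagger},(\bq^{\bh})^{*};\bh)=0$. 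Thus the two objective values agree term by term, so the $\Dualh$ objective at the candidate point equals $\sum_{i\in\calN}\tilde{w}_i R_i((\bp^{\bh})^{\dagger};\bh)$, which is the optimal value of $\Qone$. Combining with (i): the candidate point is feasible for $\Dualh$ and attains a value no smaller than the optimum of $\Dualh$, hence it is optimal.

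I do not expect a serious obstacle; the argument is essentially that dropping the binary selection variables is costless once we recognize that $R_i$ is nondecreasing in $q_i$ and that a user receiving zero power already contributes zero rate. The only point needing care is the boundary case $(p_i^{\bh})^{\dagger}=0$, where one must check that assigning $q_i^{\bh}=0$ leaves the objective unchanged, together with the observation that the interference term depends on $\bp^{\bh}$ but not on $\bq^{\bh}$, which is what makes the per-user inequality in step~(i) valid regardless of the selection vector.
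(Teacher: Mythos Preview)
Your proposal is correct and follows essentially the same approach as the paper's proof: first bound the $\Dualh$ objective above by the $\Qone$ objective via $q_i^{\bh}\in\{0,1\}$ and $\tilde{w}_i\ge 0$, then verify that the constructed pair $\{(\bp^{\bh})^{\dagger},(\bq^{\bh})^{*}\}$ is feasible and attains the $\Qone$ optimum. If anything, your version is slightly more explicit than the paper's, which simply appeals to ``simple arithmetic operations'' for the equality step, whereas you spell out the $p_i^{\dagger}=0$ case and the fact that the interference term is independent of $\bq^{\bh}$.
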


\begin{proof}
%Please see the full version of this paper~\cite{kim2020joint}.
See Appendix~\ref{prove:thm:user_selection_exclusion}.
\end{proof}

\noindent Hence, we focus on Problem~$\Qone$ rather than Problem~$\Dualh$.
For notational simplicity, we omit $\bh$ and assume, without loss of generality, that users are ordered such that $\eta_i>\eta_j$ if $i<j$.
Then, Problem~$\Qone$ can be equivalently reformulated as
\begin{IEEEeqnarray}{c'l'l}
	\Qtwo & \maximize_{p_i,\,\forall i\in\calN} & \sum_{i=1}^N \tilde{w}_i \log_2 \left( 1 + \frac{p_i}{\sum_{j>i} p_j + \eta_i^2} \right) \nonumber\\
	&\subjto
	&\sum_{i=1}^N p_i \le \Pmax,\nonumber\\
	&& p_i \ge 0, ~\forall i\in\calN.\nonumber
\end{IEEEeqnarray}
Problem~$\Qtwo$ is still nonconvex, so we cannot apply standard tools for convex optimization.
%Furthermore, as mentioned before, the problem should be able to be solved with low computational complexity.
Furthermore, as mentioned before, the problem needs to be solved with very low computational complexity since the user selection and power allocation should be done at every slot with a very short time period.
To cope with these challenges, we first find \textit{candidate} users to whom power may be allocated, and then address how to optimally allocate power to them.
To this end, we define a \textit{last SIC user} as follows.

\begin{definition}
A \textit{last SIC user} refers to a user who does not experience any interference signal after the SIC process.
\end{definition}

We start with the assumption that User~$k$ is the last SIC user and has been allocated a certain amount of power.
Accordingly, we assume that $p_k$ is given as a certain positive value, and $p_i$'s for $i>k$ are given as zero so that User~$k$ does not experience any interference signal after the SIC process.
Under this assumption, $p_i$'s for $i\ge k$ are no longer decision variables.
Note that how to select the last SIC user and how much power to allocate to it will be discussed later.
Accordingly, Problem~$\Qtwo$ can be reformulated as
\begin{IEEEeqnarray}{c'c'l}
	\Qthree & \maximize_{p_i,\,\forall i<k} & \sum_{i=1}^{k-1} \tilde{w}_i \log_2 \left( 1 + \frac{p_i}{\sum_{j>i} p_j + \eta_i^2} \right) \nonumber\\
	&\subjto
	&\sum_{i=1}^{k-1} p_i + p_k \le \Pmax, \nonumber \\
	&& p_i \ge 0, ~\forall i<k.\nonumber
\end{IEEEeqnarray}
The purpose of this problem is not to find power allocation to users but to find \textit{candidate} users when User~$k$ is the last SIC user.
Based on the fact that the interference power is usually much greater than the noise power, the noise power of users suffering from the interference signals can be assumed to be negligible.
Hence, we assume that $\sigma_i^2=0$ for $i<k$.
%Then, by letting $r_i=\log_2(\sum_{j=i}^kp_j/\sum_{j=i+1}^kp_j)$ for $i<k$ and $r_k=\log_2(p_k)$, we can approximate Problem~$\Qthree$ as
Then, by letting $\rho_i = \sum_{j=i}^k p_j$, we can approximate Problem~$\Qthree$ as
\begin{IEEEeqnarray}{c'c'l}
	\Qfour & \maximize_{\rho_i,\,\forall i<k} & \sum_{i=1}^{k-1} \tilde{w}_i \log_2 \left( \frac{\rho_i}{\rho_{i+1}} \right) \nonumber\\
	&\subjto
	&\prod_{i=1}^{k-1}\frac{\rho_i}{\rho_{i+1}}\times\rho_k \le \Pmax \nonumber\\
	&& \frac{\rho_i}{\rho_{i+1}} \ge 1, ~\forall i<k, \nonumber
\end{IEEEeqnarray}
where the constraints are equivalent to those in Problem~$\Qthree$, which can be derived by simple arithmetic operations.
In succession, by letting $r_i = \log_2(\rho_i/\rho_{i+1})$ for $i<k$, $r_k=\log_2(\rho_k)$, and taking the logarithm of the both sides of the constraints, we can reformulate Problem~$\Qfour$ equivalently as
\begin{IEEEeqnarray}{c'c'l}
	\Qfive & \maximize_{r_i,\,\forall i<k} & \sum_{i=1}^{k-1} \tilde{w}_i r_i \nonumber\\
	&\subjto
	&\sum_{i=1}^{k-1} r_i + r_k \le \log_2(\Pmax), \nonumber\\
	&& r_i \ge 0, ~\forall i<k. \nonumber
\end{IEEEeqnarray}
%where the constraints are equivalent to those in Problem~$\Qthree$, which can be derived by simple arithmetic operations.
In Problem~$\Qfive$, the decision variables, $r_i$'s for all $i<k$, are linearly combined in the objective function, and the feasible set is a unit simplex.
Hence, it is obvious that the objective function is maximized when all $r_i$'s, except the one with the largest weight, are zero.
Also, by the definition of $r_i$, we can easily see that, for any $i<k$, $p_i$ is zero if and only if $r_i$ is zero.
Thus, we can conclude that only one user with the largest weight is selected as the other \textit{candidate} user together with the last SIC user, i.e., User~$k$.
%Also, by the definition of $r_i$, we can easily see that, for any $i<k$, $p_i$ is zero if and only if $r_i$ is zero.
%Thus, only one user with the largest weight is selected as the other \textit{candidate} user together with the last SIC user, i.e., User~$k$.
We state this result in the following theorem.
\begin{theorem}\label{thm:two user selected}
	Under the assumption that the noise signals of users who suffer from the interference signals are neglected, by the solution to Problem~$\Qtwo$, at most two users are selected as \textit{candidate} users to whom power may be allocated.
	To be specific, when User~$k$ is selected as the last SIC user, User~$\phi_k$ is accordingly selected as the other \textit{candidate} user where
	\begin{equation}\label{eq:two user selected}
		\phi_k = \argmax_{i<k} \{\tilde{w}_i\}.
	\end{equation}
\end{theorem}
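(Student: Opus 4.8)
The plan is to establish Theorem~\ref{thm:two user selected} by formalizing the chain of reductions already sketched in the text, verifying that each transformation preserves the relevant structure, and then invoking the linear-program-over-a-simplex argument to conclude. I would organize the argument in three stages: (i) the equivalence of Problems~\Qtwo, \Qthree, and \Qfour\ (the latter under the stated noise-neglect approximation), (ii) the equivalence of Problem~\Qfour\ to the linear program \Qfive, and (iii) the identification of the optimal vertex of \Qfive\ and its translation back to the power variables.

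First I would fix the last SIC user~$k$ and restrict attention to Problem~\Qthree, which is exactly Problem~\Qtwo\ under the substitution $p_i = 0$ for $i>k$ and $p_k$ held at a fixed positive value; since the objective of \Qtwo\ contains no term rewarding $p_i>0$ for $i>k$ beyond what is captured through the interference denominators (which only make other users' rates smaller), restricting to this slice is without loss for the purpose of characterizing which users receive power. Next, setting $\sigma_i^2 = 0$ for $i<k$ turns each summand $\tilde w_i \log_2(1 + p_i/(\sum_{j>i}p_j + \eta_i^2))$ into $\tilde w_i \log_2((\sum_{j\ge i}p_j)/(\sum_{j>i}p_j))$, i.e.\ $\tilde w_i \log_2(\rho_i/\rho_{i+1})$ with $\rho_i = \sum_{j=i}^k p_j$; I would check that the power-budget constraint $\sum_{i=1}^{k-1}p_i + p_k \le \Pmax$ becomes $\rho_1 \le \Pmax$ and that $\rho_1 = \bigl(\prod_{i=1}^{k-1}\rho_i/\rho_{i+1}\bigr)\rho_k$ as written in \Qfour, and that $p_i \ge 0$ for $i<k$ is equivalent to $\rho_i \ge \rho_{i+1}$, i.e.\ $\rho_i/\rho_{i+1}\ge 1$. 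This gives the equivalence \Qthree\ $\Leftrightarrow$ \Qfour. Then the change of variables $r_i = \log_2(\rho_i/\rho_{i+1})$ for $i<k$ and $r_k = \log_2 \rho_k$ is a bijection between the feasible set of \Qfour\ and that of \Qfive; taking $\log_2$ of the budget constraint in \Qfour\ gives $\sum_{i=1}^{k-1} r_i + r_k \le \log_2(\Pmax)$, and $r_i \ge 0$ for $i<k$, while the objective becomes the linear form $\sum_{i=1}^{k-1}\tilde w_i r_i$, so \Qfour\ $\Leftrightarrow$ \Qfive.

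Finally, Problem~\Qfive\ maximizes a linear functional over a truncated simplex in the variables $(r_i)_{i<k}$ (with $r_k$ absorbed into the slack of the budget inequality and playing no role in the objective beyond shifting the budget, since $\Pmax$ is fixed and $r_k$ is determined by the fixed $p_k$). A linear functional over a polytope attains its maximum at a vertex, and the relevant vertices here put all the mass on a single coordinate $r_{i^\star}$; by nonnegativity of the weights $\tilde w_i = w_i + \lambda_i$ and the monotonicity of the objective in each $r_i$, the optimum drives the budget to equality and picks $i^\star = \argmax_{i<k}\tilde w_i$, so $r_{i^\star}>0$ and $r_i = 0$ for all other $i<k$. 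Since $r_i = 0 \iff \rho_i = \rho_{i+1} \iff p_i = 0$ for $i<k$, exactly one user $\phi_k = \argmax_{i<k}\{\tilde w_i\}$ among those preceding $k$ receives positive power, and together with User~$k$ this yields at most two \emph{candidate} users, as claimed.

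The main obstacle I anticipate is stage (i): carefully justifying the noise-neglect step and the fixed-$p_k$ slice as a genuine characterization of the \emph{candidate} set rather than an uncontrolled heuristic. The reductions \Qfour\ $\to$ \Qfive\ are routine bijective substitutions, and the simplex argument is immediate; but one must be precise about what "candidate user" means — namely, a user eligible to receive power under the approximated model for a \emph{given} choice of last SIC user $k$ — and verify that the monotonicity of each rate term in its own power (so that sending $p_i \to 0$ only when $r_i = 0$ is forced, not chosen) makes the "only one extra user" conclusion airtight. I would state the approximation explicitly as a modeling assumption (as the theorem's hypothesis already does) so that the remaining steps are exact.
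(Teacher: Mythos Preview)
Your proposal is correct and follows essentially the same route as the paper: the paper's argument is precisely the chain of reductions $\Qtwo\to\Qthree\to\Qfour\to\Qfive$ (via the substitutions $\rho_i=\sum_{j=i}^k p_j$ and $r_i=\log_2(\rho_i/\rho_{i+1})$) followed by the observation that a linear objective over the simplex is maximized at a single coordinate, together with the equivalence $r_i=0\iff p_i=0$. Your added care about the meaning of ``candidate user'' and the status of the fixed-$p_k$ slice is well placed, but the paper simply treats the choice of last SIC user~$k$ as a conditioning assumption (to be optimized over afterward via~\eqref{eq:kstar}) rather than something to be justified as without loss, so you need not argue more than that.
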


\noindent By Theorem~\ref{thm:two user selected}, Problem~$\Qtwo$ can be reduced to the power allocation problem for the two-user case, defined by
\begin{IEEEeqnarray}{c'c'l}
	\Ptwo & \maximize_{p_{\phi_k}, p_k} & \tilde{w}_{\phi_k} \log_2 \left( 1 + \frac{p_{\phi_k}}{p_k + \eta_{\phi_k}^2} \right) + \tilde{w}_k \log_2 \left( 1 + \frac{p_k}{\eta_k^2} \right) \nonumber \label{def:objective}\\
%	&&\quad +\> \tilde{w}_k \log_2 \left( 1 + \frac{p_k}{\eta_k^2} \right)\nonumber\\
	&\subjto
	&p_{\phi_k} + p_k \le \Pmax, \nonumber\\
	&&p_{\phi_k} \ge 0; ~ p_k\ge0.\nonumber
\end{IEEEeqnarray}
This two-user power allocation problem can be optimally solved in closed forms.

\begin{theorem}\label{thm:2-user}
The optimal solution, $\{p_{\phi_k}^*, p_k^*\}$, to Problem~$\Ptwo$ is derived as
\begin{align}
	&p_k^* = \begin{cases}
				0, & \textnormal{if } {\tilde{w}_k}/{\tilde{w}_{\phi_k}} < C_1, \\
				\Pmax, & \textnormal{if } {\tilde{w}_k}/{\tilde{w}_{\phi_k}} \ge C_2, \\
				\frac{\tilde{w}_{\phi_k} \eta_k^2 - \tilde{w}_k \eta_{\phi_k}^2}{(\tilde{w}_k - \tilde{w}_{\phi_k})}, & \textnormal{otherwise,} \\
		\end{cases} \label{eq:pkstar}\\
	&p_{\phi_k}^* = \Pmax - p_k^*, \label{eq:pphistar}
\end{align}
where
\begin{equation}
	C_1 = \frac{\eta_k^2}{\eta_{\phi_k}^2} \quad \textnormal{and} \quad C_2 = \frac{\Pmax+\eta_k^2}{\Pmax+\eta_{\phi_k}^2}.
\end{equation}
\end{theorem}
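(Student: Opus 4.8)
The plan is to eliminate one variable by a monotonicity argument, reduce Problem~$\Ptwo$ to a one-dimensional maximization over $p_k$, and then read off the optimizer from the sign of a single derivative. First, the objective of Problem~$\Ptwo$ is strictly increasing in $p_{\phi_k}$ for every fixed $p_k$, since only the first logarithm involves $p_{\phi_k}$; hence the power budget is exhausted at any optimum, $p_{\phi_k}^*+p_k^*=\Pmax$, which is~\eqref{eq:pphistar}, and I substitute $p_{\phi_k}=\Pmax-p_k$.

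Writing $x=p_k$ and using $1+\frac{\Pmax-x}{x+\eta_{\phi_k}^2}=\frac{\Pmax+\eta_{\phi_k}^2}{x+\eta_{\phi_k}^2}$, the problem becomes $\max_{0\le x\le\Pmax} f(x)$ with
\[
f(x)=\tilde{w}_{\phi_k}\bigl[\log_2(\Pmax+\eta_{\phi_k}^2)-\log_2(x+\eta_{\phi_k}^2)\bigr]+\tilde{w}_k\bigl[\log_2(x+\eta_k^2)-\log_2\eta_k^2\bigr].
\]
Then $f'(x)$ is a positive multiple of
\[
g(x):=\frac{\tilde{w}_k}{x+\eta_k^2}-\frac{\tilde{w}_{\phi_k}}{x+\eta_{\phi_k}^2}=\frac{(\tilde{w}_k-\tilde{w}_{\phi_k})\,x+\tilde{w}_k\eta_{\phi_k}^2-\tilde{w}_{\phi_k}\eta_k^2}{(x+\eta_k^2)(x+\eta_{\phi_k}^2)},
\]
and since the denominator is positive on $[0,\Pmax]$, $f'$ has the sign of its \emph{affine} numerator and hence changes sign at most once; its unique zero, when interior, is exactly $x^\star=\frac{\tilde{w}_{\phi_k}\eta_k^2-\tilde{w}_k\eta_{\phi_k}^2}{\tilde{w}_k-\tilde{w}_{\phi_k}}$, the candidate in~\eqref{eq:pkstar}.

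It remains to compare endpoints. The user ordering forces $\eta_{\phi_k}^2>\eta_k^2$ (because $\phi_k<k$), which gives $C_1<C_2<1$ and makes the numerator of $g$ strictly positive at $x=0$ whenever $\tilde{w}_k\ge\tilde{w}_{\phi_k}$. A direct check shows $g(0)\ge0\iff\tilde{w}_k/\tilde{w}_{\phi_k}\ge C_1$ and $g(\Pmax)\le0\iff\tilde{w}_k/\tilde{w}_{\phi_k}\le C_2$. If $\tilde{w}_k/\tilde{w}_{\phi_k}<C_1$ then $\tilde{w}_k<\tilde{w}_{\phi_k}$ and $g(0)<0$, so $g<0$ throughout and $p_k^*=0$; if $\tilde{w}_k/\tilde{w}_{\phi_k}\ge C_2$ then $g(\Pmax)\ge0$ and, combined with the sign remark above, $g>0$ on $[0,\Pmax)$, so $p_k^*=\Pmax$; and if $C_1\le\tilde{w}_k/\tilde{w}_{\phi_k}<C_2$ then $g(0)\ge0$, $g(\Pmax)<0$, so $f'$ drops from $\ge0$ to $<0$ once, at $x^\star\in[0,\Pmax)$, which is therefore the global maximizer, $p_k^*=x^\star$ (and $x^\star\ge0$ since $\tilde{w}_k<\tilde{w}_{\phi_k}$ here makes the numerator and denominator of $x^\star$ both nonpositive). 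Since $C_1<C_2$, these three cases are exhaustive, giving~\eqref{eq:pkstar} and, with the first step,~\eqref{eq:pphistar}.

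The calculations are routine; the single point requiring care is that $f$ is \emph{not} concave in general, so one cannot simply set $f'(x)=0$. The argument instead rests on two structural facts: $f'$ has the sign of an affine function of $x$ (so it changes sign at most once), and $\eta_{\phi_k}^2>\eta_k^2$, which rules out the remaining bad configuration in which $f'$ would pass from negative to positive (an interior minimum). Everything else is endpoint bookkeeping.
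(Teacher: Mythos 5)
Your proof is correct and follows essentially the same route as the paper's: saturate the power budget so that $p_{\phi_k}+p_k=\Pmax$, reduce to a one-variable problem in $p_k$, and locate the maximizer from the sign of the derivative, whose numerator is affine in $p_k$ and therefore changes sign at most once. The only cosmetic difference is that you organize the case analysis by the endpoint signs of that numerator against the thresholds $C_1$ and $C_2$, whereas the paper first splits on $\tilde{w}_k/\tilde{w}_{\phi_k}$ versus $1$ and then subdivides; the content is identical.
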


\begin{proof}
%Please see the full version of this paper~\cite{kim2020joint}.
See Appendix~\ref{prove:thm:2-user}.
\end{proof}

\noindent Using Theorem~\ref{thm:2-user}, the weighted sum rate, $R_{\textnormal{sum}}^{k}$, when User~$k$ is selected as the last SIC user can be given as
%\begin{IEEEeqnarray}{rCl}\label{eq:RSIC}
%	R_{\textnormal{sum}}^{k} & = & \tilde{w}_{\phi_k} \log_2 \left( 1 + \frac{p_{\phi_k}^*}{p_k^* + \eta_{\phi_k}^2} \right) + \tilde{w}_k \log_2 \left( 1 + \frac{p_k^*}{\eta_k^2} \right).
%\end{IEEEeqnarray}
%The index, $k^*$, of the optimal last SIC user can be obtained as
\begin{equation}\label{eq:RSIC}
	R_{\textnormal{sum}}^{k} = \tilde{w}_{\phi_k} \log_2 \left( 1 + \frac{p_{\phi_k}^*}{p_k^* + \eta_{\phi_k}^2} \right) + \tilde{w}_k \log_2 \left( 1 + \frac{p_k^*}{\eta_k^2} \right).
\end{equation}
In turn, the optimal last SIC user, $k^*$, can be obtained as
\begin{IEEEeqnarray}{rCll}\label{eq:kstar}
	k^* & = & \argmax_{k\in\calN} R_{\textnormal{sum}}^{k}.
\end{IEEEeqnarray}
The pseudocode of USPA is described in Algorithm~\ref{Alg:PAA}.

\begin{algorithm}[!t]
%\small
%\footnotesize
\DontPrintSemicolon
\SetKwInOut{Input}{input}\SetKwInOut{Output}{output}
\For{each $k\in\calN$}{
	Suppose that User~$k$ is the last SIC user.\\
	Select the other \textit{candidate} user, $\phi_k$, using \eqref{eq:two user selected}.\\
%	Suppose User~$k$ and User~$\phi_k$ as the \textit{candidate} users using \eqref{eq:two user selected}.\\
	Calculate $R_{\textnormal{sum}}^{k}$ using \eqref{eq:RSIC}.
%	Suppose Users~$k$ and~$\phi_k$ are selected using Theorem~\ref{thm:two user selected}.\!\!\!\!\!\!\!\!\\
%	Find optimal power allocation using Theorem~\ref{thm:2-user}.\\
%	Calculate $R_{\textnormal{sum}}^{k}$ using~\eqref{eq:RSIC}.
}
Find $k^*$ using~\eqref{eq:kstar}.\\
Select Users~$k^*$ and $\phi_{k^*}$ as the optimal \textit{candidate} users.\\
Allocate power to them according to \eqref{eq:pkstar} and \eqref{eq:pphistar}.
\caption{User Selection and Power Allocation}
\label{Alg:PAA}
\end{algorithm}

Before closing this section, we discuss the computational complexity.
Our USPA has linear computational complexity, i.e., $\mathcal{O}(N)$, since $R_{\textnormal{sum}}^{k}$ can be calculated based on the closed-form formula in~\eqref{eq:RSIC}.
As a benchmark, we consider the GP-based algorithm proposed in~\cite{di2016sub}, which provides an optimal solution to Problem~$\Qtwo$ by solving its equivalent GP problem using interior point methods.
Hence, we call it OPT.
Note that the computational complexity of OPT is known as $\mathcal{O}((k+m)^{1/2}(mk^2+k^3+n^3))$~\cite{nemirovski2004interior}, where $k$, $m$, and $n$ are the problem-dependent parameters.
In our problem, they are set as follows: $k=2N+1$, $m=N+1$, and $n=N$.
Thus, OPT has nonlinear computational complexity of $\mathcal{O}((3N+2)^{1/2}(13N^3+20N^2+11N+2) + L) \approx \mathcal{O}(13\sqrt{3}N^{7/2} + L)$, where $L$ represents the computational complexity for converting from the solution to the GP problem back to that to Problem~$\Qtwo$.
%
%On the other hand, a GP-based algorithm has been proposed in~\cite{di2016sub}, which provides an optimal solution to Problem~$\Qtwo$.
%Hence, we call it OPT.
%However, OPT requires performing an iterative algorithm, such as interior point methods (IPM).
%It is known that the computational complexity of IPM for GP is given by $\mathcal{O}((k+m)^{1/2}(mk^2+k^3+n^3))$, where $k$ is an additional parameter arising from conversion to a GP problem in the standard from, and $m$ and $n$ are parameters representing the number of constraints and the dimension of the variable in the GP problem, respectively.
%In the case of our Problem~$\Qtwo$, the parameters are given as $k=2N+1$, $m=N+1$, and $n=N$.
%Thus, the computational complexity of OPT is $\mathcal{O}((3N+2)^{1/2}(13N^3+12N^2+7N+2)L) \approx \mathcal{O}(39N^{7/2}L)$, where $L$ represents the computational complexity of converting the solution to the GP problem back to the original Problem~$\Qtwo$.
%%However, OPT has nonlinear computational complexity since it requires performing an iterative algorithm, such as interior point methods.\footnote{An accurate assessment of the computational complexity of OPT is out of the scope of this paper. Instead, we will show the execution time comparison between USPA and OPT in the next section.}
As a result, the computational complexity of OPT is much higher than that of USPA, and thus it is a heavy burden to run OPT at every slot.
We will show the comparison between USPA and OPT in detail in the next section.
%, but to mention in advance, USPA is faster than OPT in the order of several thousands.
%As will be shown in the next section, its computational complexity is much higher than that of USPA.
%As a result, it is a heavy burden to run OPT at every slot, rather than our USPA with low computational complexity.

\section{Simulation Results}
\label{sec:sim}
We consider a single cell with one BS with a maximum transmission power of \SI{43}{\dBm} and \num{5} users.
According to~\cite{3gpp.36.814}, we set the large-scale path loss to $128.1+37.6\log_{10}(d_{\textnormal{km}})$\,\si{\dB}, where $d_{\textnormal{km}}$ is the distance in kilometers, and consider the lognormal shadow fading with a standard deviation of \SI{8}{\dB} and the small-scale fading with coefficients following independent and identical zero-mean unit-variance complex Gaussian distributions.
The noise power for each user is set to \SI[per-mode=symbol]{-104}{\dBm}.
The step size of the stochastic subgradient algorithm in~\eqref{eq:update} is set to $\zeta^t = 1/t$, which satisfies the conditions in~\eqref{eq:condition} so that the convergence of the algorithm is guaranteed.

We first compare USPA and OPT in solving Problem~$\Qtwo$.
The comparison results for \num{1000} independent trials are shown in Fig.~\ref{fig:USPA vs OPT}.
%Fig.~\ref{fig:USPA vs OPT} shows the comparison results for \num{1000} independent trials.
In each trial, user weights are randomly set between \num{0} and \num{1} and then normalized by their sum, and the distance of each user from the BS is randomly set between \SI{20}{\meter} and \SI{500}{\meter}.
Fig.~\ref{fig:USPA vs OPT:wsr} shows the performance comparison results over \num{1000} trials in terms of the weighted sum rate.
As shown in the figure, the performance of USPA is very close to that of OPT.
On average, the performance difference between the two methods over \num{1000} trials is only \SI[per-mode=symbol]{0.0412}{\bps\per\Hz} (\SI{0.7}{\percent}).
Fig.~\ref{fig:USPA vs OPT:count} shows the frequency histogram of the number of selected users, and Fig.~\ref{fig:USPA vs OPT:awdr} shows the weighted average data rates of users, where the user indices are sorted in decreasing order of the weighted data rate.
%\footnotetext{Unlike our USPA where power allocation is achieved by the closed-form formula, there is a numerical precision errors in OPT because the power allocation is achieved through an iterative algorithm. For this reason, near-zero but not zero power allocation is achieved although zero power allocation is optimal. Under this circumstance, for fair comparison, we have regarded only users with data rates above \SI[per-mode=symbol]{0.1}{\bps\per\Hz} as selected users.}
According to Fig.~\ref{fig:USPA vs OPT:count}, in OPT, the probability of selecting three or more users reaches about \SI{25}{\percent}, which is not small.
However, according to Fig.~\ref{fig:USPA vs OPT:awdr}, since the sum of the weighted data rates of the first two users occupies over \SI{96}{\percent} of the weighted sum rate, the last three users do not significantly contribute to the performance.
That is why the performance of USPA, which selects at most two users, is very close to that of OPT.
In addition, it is worth noting that in obtaining the above simulation results, OPT have taken about \num{3900}~times more execution time than USPA.\footnote{All simulation results have been obtained by using MATLAB R2019b on a computer with Intel Core i7-9700K CPU (3.60 GHz) and 32.0 GB RAM.}
% In addition, CVX has been used to solve geometric programming in OPT.}
These results verify that not only does USPA provide good performance close to the optimal one, but it also has very low computational complexity.

\begin{figure}[!t]
\centering
\subfloat[Weighted sum rate]{\includegraphics[width=.94\linewidth]{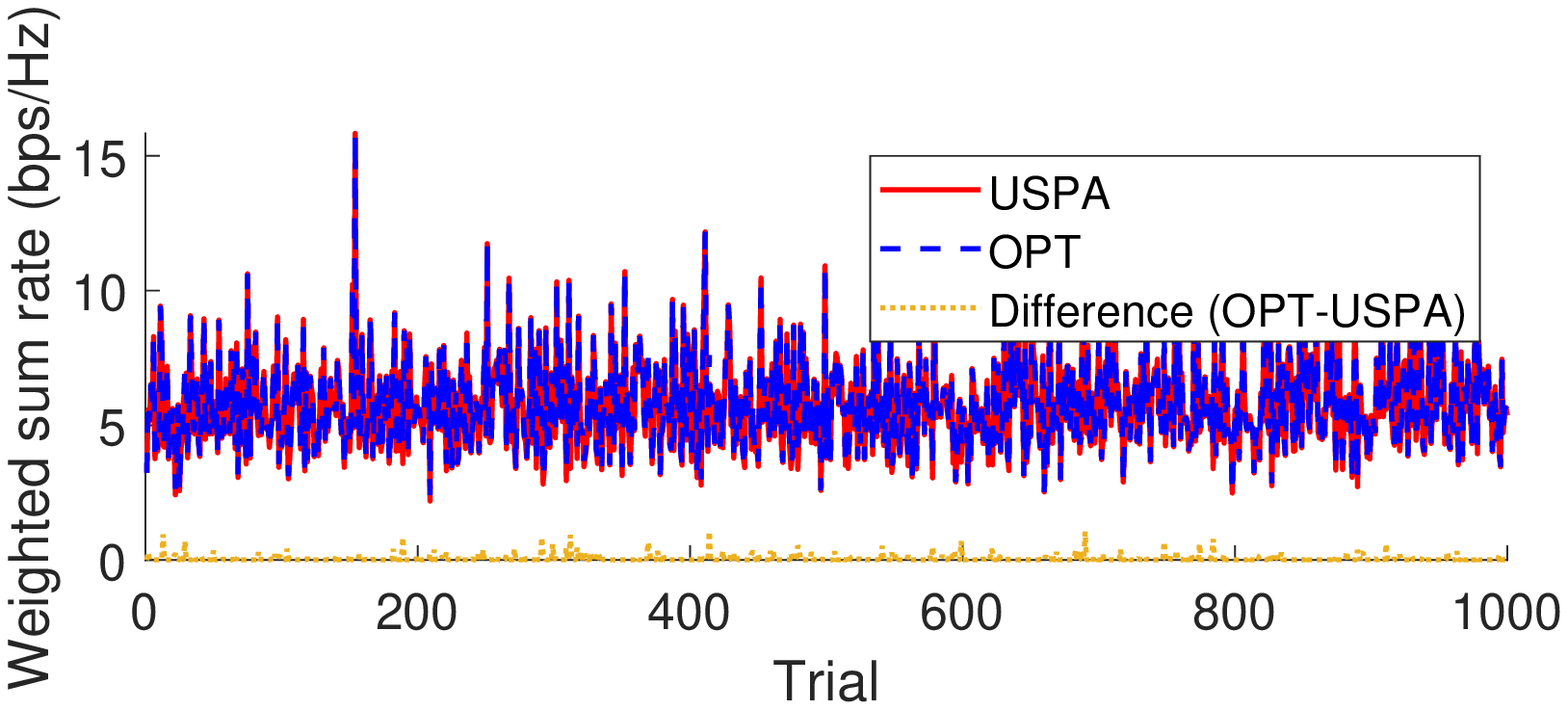}\label{fig:USPA vs OPT:wsr}}\\
\subfloat[Frequency of the number of selected users]{~\includegraphics[width=.47\linewidth]{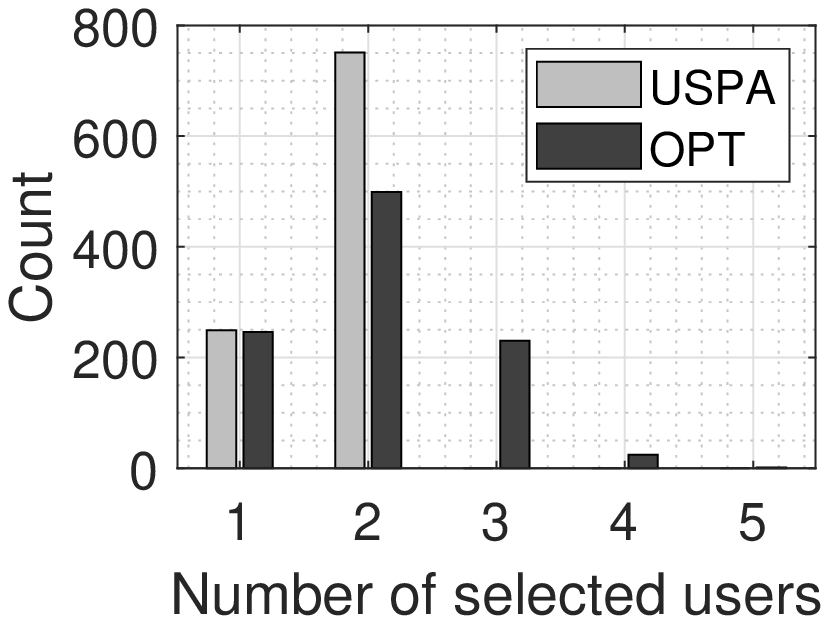}~\label{fig:USPA vs OPT:count}}
\hfil
\subfloat[Weighted data rate for each user]{~\includegraphics[width=.47\linewidth]{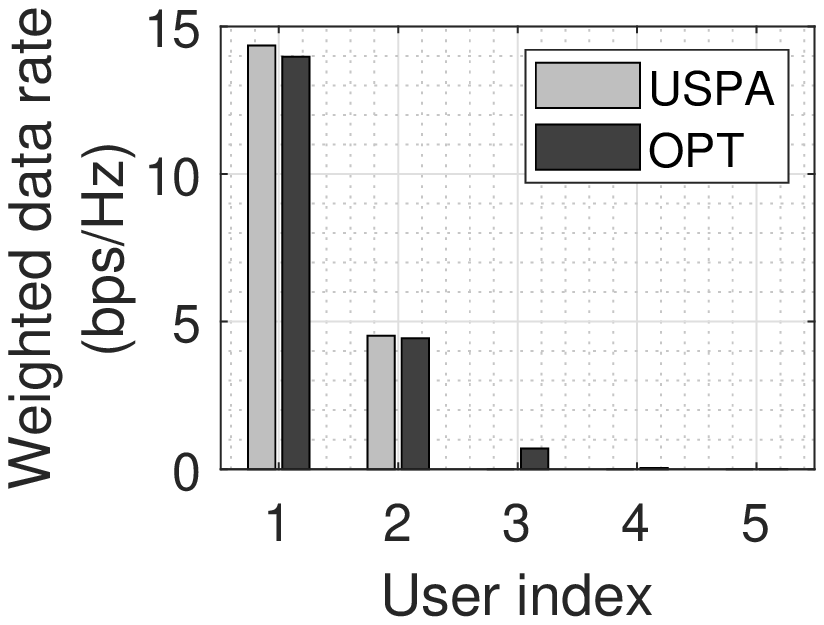}~\label{fig:USPA vs OPT:awdr}}
\caption{Comparison results between USPA and OPT.\label{fig:USPA vs OPT}}
\end{figure}

Now, we evaluate the performance of our OUPS by comparing the following three types of OUPS:
(i) OUPS-USPA where user selection and power allocation are performed by our USPA, (ii) OUPS-OPT where they are performed by OPT, and (iii) OUPS-OMA where only one user who can provide the highest instantaneous weighted data rate using full power is selected at each slot.
%One is OUPS-USPA in which user selection and power allocation are performed by USPA.
%Another is OUPS-OPT in which user selection and power allocation are performed by OPT.
%The other is OUPS-OMA in which only one user who can provide the highest instantaneous weighted data rate using full power is selected at each slot.
Fig.~\ref{fig:OUPS} shows the performance results over \num[group-separator={,}]{10000} slots under the scenario where \num{5} users with equal weights are located \SI{20}{\meter}, \SI{140}{\meter}, \SI{260}{\meter}, \SI{380}{\meter}, and \SI{500}{\meter} away from the BS.
The first three users have the minimum average data rate requirements of \SI[per-mode=symbol]{2}{\bps\per\Hz}, and the last two users have those of \SI[per-mode=symbol]{4}{\bps\per\Hz}.
Fig.~\ref{fig:OUPS:sum rate} shows that OUPS-USPA not only provides much higher performance than OUPS-OMA, but also provides high performance comparable to OUPS-OPT.
On the other hand, Fig.~\ref{fig:OUPS:data rate} shows the average data rates of each user for our OUPS-USPA.
As shown in the figure, we can see that the QoS constraints for all users are well satisfied, and the user closest to the BS achieves the highest average data rate among all users so that the average sum rate is maximized.
The results demonstrate that OUPS-USPA provides near-optimal performance while ensuring the given QoS constraints.

\begin{figure}[!t]
\centering
\subfloat[Average sum rate $(\bar{R}_i = 2, \forall i)$]{\includegraphics[width=.94\linewidth]{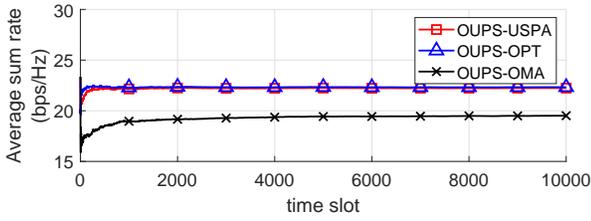}\label{fig:OUPS:sum rate}}
\hfil
\subfloat[Average data rate of each user]{\includegraphics[width=.94\linewidth]{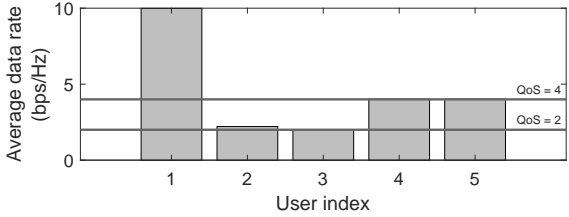}\label{fig:OUPS:data rate}}
\caption{Performance results of OUPS.\label{fig:OUPS}}
\end{figure}

\section{Conclusion}
\label{sec:conc}
We have studied the joint user and power scheduling problem to maximize the weighted average sum rate while ensuring given QoS constraints.
We have first developed OUPS that fully exploits time-varying channels, and then developed USPA with extremely low computational complexity.
Through the simulation results, we have shown that USPA provides near-optimal performance despite being about \num{3900}~times faster than OPT, and OUPS meets the QoS constraints well.
This study will be the cornerstone of our future work on scheduling for multi-carrier NOMA systems.

\appendices
%\markboth{Appendix}{Appendix}
%\renewcommand{\thesection}{\Alph{section}}
\numberwithin{equation}{section}
\numberwithin{definition}{section}
\numberwithin{lemma}{section}
\section{Proof of Theorem~\ref{thm:zero-duality-gap}}
\label{prove:thm:zero-duality-gap}

In order to prove the strong duality between Problem~$\Problem$ and its dual problem, Problem~$\Dual$, we utilize the \textit{time-sharing} condition proposed in~\cite{yu2006dual}, which is defined as follows.
\begin{definition}
%Rigorously speaking, the time-sharing condition implies that the optimal value of Problem~$\Problem$ is a concave function with respect to $\bar{\bR}$, where $\bar{\bR}=(\bar{R}_i)_{i\in\calN}$.
%To define rigorously, let $\{\barbp_x, \barbq_x\}$ and $\{\barbp_y, \barbq_y\}$ be optimal solutions to Problem~$\Problem$ when $\bar{\bR}$ is given as $\bar{\bR}_x$ and $\bar{\bR}_y$, respectively.
Let $\{\barbp_x, \barbq_x\}$ and $\{\barbp_y, \barbq_y\}$ be the optimal solutions to Problem~$\Problem$ with $\bar{\bR}=\bar{\bR}_x$ and $\bar{\bR}=\bar{\bR}_y$, respectively, where $\bar{\bR}=(\bar{R}_i)_{\forall i\in\calN}$, $\bar{\bR}_x=(\bar{R}_{x,i})_{\forall i\in\calN}$, and $\bar{\bR}=(\bar{R}_{y,i})_{\forall i\in\calN}$.
Then, Problem~$\Problem$ is said to satisfy the time-sharing condition if for any $\bar{\bR}_x$ and $\bar{\bR}_y$, and for any $0\le\theta\le1$, there always exists a feasible solution $\{\barbp_z, \barbq_z\}$ such that
\begin{equation}\label{eq:time-sharing-condition-1}
	\E_{\bh} \left[R_i(\bp_z^{\bh},\bq_z^\bh;\bh)\right] \ge \theta \bar{R}_{x,i} + (1-\theta) \bar{R}_{y,i}, ~\forall i\in\calN,
\end{equation}
\begin{multline}\label{eq:time-sharing-condition-2}
	\E_{\bh} \left[ \sum_{i\in\calN} w_i R_i(\bp_z^{\bh}, \bq_z^\bh ; \bh) \right]
	\ge \theta \E_{\bh} \left[ \sum_{i\in\calN} w_i R_i(\bp_x^{\bh},\bq_x^\bh ; \bh) \right]\\
	+ (1-\theta) \E_{\bh} \left[ \sum_{i\in\calN} w_i R_i(\bp_y^{\bh},\bq_y^\bh ; \bh) \right].
\end{multline}
\end{definition}

It has been proven in~\cite{yu2006dual} that if an optimization problem satisfies the time-sharing condition, the strong duality holds regardless of the convexity of the problem.
%In addition to the definition of the time-sharing condition in~\cite{yu2006dual}, it has been proven that if an optimization problem satisfies the time-sharing condition, the strong duality holds regardless of the convexity of the problem.
Hence, we prove Theorem~\ref{thm:zero-duality-gap} by showing that Problem~$\Problem$ satisfies the time-sharing condition.
First, for any $\{\barbp_x,\barbq_x\}$ and $\{\barbp_y,\barbq_y\}$, and for any $\theta\in[0,1]$, let us set $\{\bp_z,\bq_z\}$ as
%It can be easily verified by setting
\begin{equation}\label{eq:appendixA:feasible_solution}
	\{\bp_z^t, \bq_z^t\} = \begin{cases} \{\bp_x^t, \bq_x^t\} & t\le \lfloor\theta T\rfloor,\\ \{\bp_y^t, \bq_y^t\}, & t \ge \lfloor\theta T+1\rfloor, \end{cases}
\end{equation}
where $\lfloor\cdot\rfloor$ is the floor function that gives the largest integer not exceeding its argument.
Then, the first condition~\eqref{eq:time-sharing-condition-1} holds as follows.
For all $i\in\calN$,
%With a feasible solution defined in~\eqref{eq:appendixA:feasible_solution}, we can derive, for all $i\in\calN$,
\begin{IEEEeqnarray}{rCl}
	\IEEEeqnarraymulticol{3}{l}{%
		\E_{\bh} \left[ R_i(\bp_z^{\bh}, \bq_z^{\bh} ; \bh) \right]
	}\nonumber\\
	& = & \lim_{T\to\infty} \frac{1}{T} \sum_{t=1}^{T} R_i(\bp_z^t, \bq_z^t; \bh^t) \nonumber\\
	& = & \lim_{T\to\infty} \frac{1}{T} \Biggl( \sum_{t=1}^{\lfloor \theta T \rfloor} R_i(\bp_x^t, \bq_x^t; \bh^t) + \sum_{t=\lfloor\theta T+1\rfloor}^{T} R_i(\bp_y^t, \bq_y^t; \bh^t) \Biggr) \nonumber\\
	& = & \theta \E_{\bh} \left[ R_i(\bp_x^{\bh}, \bq_x^\bh; \bh) \right] + (1-\theta) \E_{\bh} \left[ R_i(\bp_y^{\bh}, \bq_y^\bh; \bh) \right] \nonumber\\
	& \ge & \theta \bar{R}_{x,i} + (1-\theta) \bar{R}_{y,i}.
\end{IEEEeqnarray}
where the first and third equality holds due to the ergodicity of the fading process.
Similarly, the second condition~\eqref{eq:time-sharing-condition-2} also holds as follows.
\begin{IEEEeqnarray}{rCl}
	\IEEEeqnarraymulticol{3}{l}{%
		\E_{\bh} \left[ \sum_{i\in\calN} w_i R_i(\bp_z^{\bh}, \bq_z^\bh ; \bh) \right]
	}\nonumber\\
	& = & \lim_{T\to\infty} \frac{1}{T} \sum_{t=1}^{T} \sum_{i\in\calN} w_i R_i(\bp_z^t, \bq_z^t; \bh^t) \nonumber\\
	& = & \lim_{T\to\infty} \frac{1}{T} \Biggl( \sum_{t=1}^{\lfloor \theta T \rfloor} \sum_{i\in\calN} w_i R_i(\bp_x^t, \bq_x^t; \bh^t) \nonumber\\
	&& \qquad\qquad\qquad +\> \sum_{t=\lfloor\theta T+1\rfloor}^{T} \sum_{i\in\calN} w_i R_i(\bp_y^t, \bq_y^t; \bh^t) \Biggr) \nonumber\\
	& = & \theta \E_{\bh} \left[ \sum_{i\in\calN} w_i R_i(\bp_x^{\bh}, \bq_x^\bh ; \bh) \right] \nonumber\\
	&& \qquad\qquad\qquad +\> (1-\theta) \E_{\bh} \left[ \sum_{i\in\calN} w_i R_i(\bp_y^{\bh}, \bq_y^\bh; \bh) \right]. \IEEEeqnarraynumspace
\end{IEEEeqnarray}
Hence, the time-sharing condition holds for Problem~$\Problem$. \hfill \qedsymbol

\section{Proof of Theorem~\ref{thm:user_selection_exclusion}}
\label{prove:thm:user_selection_exclusion}
Since $q_i^\bh$ is either $0$ or $1$, the objective value of Problem~$\Dualh$ is less than or equal to the optimal value of Problem~$\Qone$, i.e., $\sum_{i\in\calN}\tilde{w}_i R_i(\bp^\bh, \bq^\bh; \bh) \le \sum_{i\in\calN} \tilde{w}_i R_i(\bp^\bh; \bh)$, for any $\bp^\bh\in\calP$ and $\bq^\bh\in\calQ$.
Hence, by letting $(\bp^\bh)^\dagger$ be an optimal solution to Problem~$\Qone$, we have
%\begin{IEEEeqnarray}{rCl}
%	\sum_{i\in\calN}\tilde{w}_i R_i(\bp^\bh, \bq^\bh; \bh) & \le & \sum_{i\in\calN} \tilde{w}_i R_i(\bp^\bh; \bh) \nonumber\\
%	& \le & \sum_{i\in\calN} \tilde{w}_i R_i((\bp^\bh)^\dagger; \bh),
%\end{IEEEeqnarray}
\begin{equation}
	\sum_{i\in\calN}\tilde{w}_i R_i(\bp^\bh, \bq^\bh; \bh) \le \sum_{i\in\calN} \tilde{w}_i R_i((\bp^\bh)^\dagger; \bh),
\end{equation}
for all $\bp^\bh\in\calP$ and $\bq^\bh\in\calQ$.
%Hence, if there exists a solution to Problem~$\Dualh$ that results in an objective value equal to the optimal value of Problem~$\Qone$, this can be said to be an optimal solution.
%By simple arithmetic operations, it can be easily shown that Problem~$\Dualh$ gives the objective value equal to the optimal value of Problem~$\Qone$ by setting its solution to $(\bp^\bh)^\dagger$ and $(\bq^\bh)^*=(q_i^\bh)_{i\in\calN}$, such that $q_i^\bh=1$ if $(p_i^\bh)^\dagger>0$ and $q_i^\bh=0$ otherwise.
%
%
Consider $(\bp^\bh)^*=(\bp^\bh)^\dagger$ and $(\bq^\bh)^*=(q_i^\bh)_{i\in\calN}$ such that $q_i^\bh=1$ if $(p_i^\bh)^\dagger>0$ and $q_i^\bh=0$ otherwise.
By simple arithmetic operations, it can be easily verified that
\begin{equation}
	\sum_{i\in\calN}\tilde{w}_i R_i((\bp^\bh)^*, (\bq^\bh)^*; \bh) = \sum_{i\in\calN} \tilde{w}_i R_i((\bp^\bh)^\dagger; \bh).
\end{equation}
Hence, an optimal solution to Problem~$\Dualh$ is given as $\{(\bp^\bh)^*, (\bq^\bh)^*\}$, which is obtained from the optimal solution, $(\bp^\bh)^\dagger$, to Problem~$\Qone$.
\hfill \qedsymbol

\section{Proof of Theorem~\ref{thm:2-user}}
\label{prove:thm:2-user}
Since a larger transmission power results in a higher weighted sum rate, the first constraint in Problem~$\Ptwo$ can be replaced with $p_{\phi_k} + p_k = \Pmax$.
Then, by substituting $p_{\phi_k}$ into $\Pmax-p_k$, Problem~$\Ptwo$ can be equivalently transformed into a one-variable optimization problem as
\begin{IEEEeqnarray}{c'c'l} \label{prob:P2_prime}
	\Ptwop & \maximize_{0 \le p_k \le \Pmax} & g(p_k),
\end{IEEEeqnarray}
where
\begin{equation}
g(p_k) = \tilde{w}_{\phi_k} \log_2 \left( \frac{\Pmax + \eta_{\phi_k}^2}{p_k + \eta_{\phi_k}^2} \right) + \tilde{w}_k \log_2 \left(\frac{p_k + \eta_k^2}{\eta_k^2} \right).
\end{equation}
The derivative of $g(p_k)$ with respect to $p_k$ is given as
\begin{align}\label{eq:gprime}
	g'(p_k) &= \frac{1}{\ln2} \left[ \frac{-\tilde{w}_{\phi_k}}{p_k + \eta_{\phi_k}^2} + \frac{\tilde{w}_k}{p_k + \eta_k^2} \right] \nonumber\\
	&= \frac{1}{\ln2} \cdot \frac{(\tilde{w}_k - \tilde{w}_{\phi_k}) p_k + \tilde{w}_k \eta_{\phi_k}^2 - \tilde{w}_{\phi_k} \eta_k^2}{(p_k + \eta_{\phi_k}^2)(p_k + \eta_k^2)}.
\end{align}
From the above equation, we have $g'(\hat{p}_k) = 0$ if and only if
\begin{equation}\label{eq:p2hat}
	\hat{p}_k = \frac{\tilde{w}_{\phi_k} \eta_k^2 - \tilde{w}_k \eta_{\phi_k}^2}{(\tilde{w}_k - \tilde{w}_{\phi_k})}.
\end{equation}
Using~\eqref{eq:gprime} and \eqref{eq:p2hat}, we can derive an optimal solution, $p_k^*$, to the problem in~\eqref{prob:P2_prime} by considering the following three mutually exclusive cases:
\begin{enumerate}
\item 
		Suppose $\tilde{w}_k/\tilde{w}_{\phi_k} > 1$.
		Then, according to \eqref{eq:gprime}, $g'(p_k) < 0$ if $p_k < \hat{p}_k$ and $g'(p_k) \ge 0$ otherwise.
		Also, according to \eqref{eq:p2hat}, $\hat{p}_k < 0$ since $\eta_k<\eta_{\phi_k}$.
		Thus, $g(p_k)$ is an increasing function on $[0, \Pmax]$, resulting in $p_k^*=\Pmax$.
\item
		Suppose $\tilde{w}_k/\tilde{w}_{\phi_k} = 1$.
		Then, according to \eqref{eq:gprime}, $g'(p_k) \ge 0$ for any $p_k \in [0, \Pmax]$.
		Thus, $g(p_k)$ is an increasing function on $[0, \Pmax]$, resulting in $p_k^*=\Pmax$.
\item
		Suppose $\tilde{w}_k/\tilde{w}_{\phi_k} < 1$.
		Then, according to \eqref{eq:gprime}, $g'(p_k)>0$ if $p_k<\hat{p}_k$ and $g'(p_k)\le0$ otherwise.
		Also, according to \eqref{eq:p2hat}, we have the following two inequalities.
		\begin{align}
			\hat{p}_k < 0 ~ &\Leftrightarrow ~ \frac{\tilde{w}_k}{\tilde{w}_{\phi_k}} < \frac{\eta_k^2}{\eta_{\phi_k}^2}, \label{eq:ineq:C5}\\
			\hat{p}_k \ge \Pmax ~ &\Leftrightarrow ~ \frac{\tilde{w}_k}{\tilde{w}_{\phi_k}} \ge \frac{\Pmax+\eta_k^2}{\Pmax+\eta_{\phi_k}^2}, \label{eq:ineq:C6}
		\end{align}
		Consequently, if \eqref{eq:ineq:C5} is met, $g(p_k)$ is a decreasing function on $[0,\Pmax]$, resulting in $p_k^*=0$; if \eqref{eq:ineq:C6} is met, $g(p_k)$ is an increasing function on $[0,\Pmax]$, resulting in $p_k^*=\Pmax$; and if neither \eqref{eq:ineq:C5} nor \eqref{eq:ineq:C6} is met, $g(p_k)$ is an increasing function on $[0,\hat{p}_k]$ but a decreasing function on $[\hat{p}_k,\Pmax]$, resulting in $p_k^*=\hat{p}_k$.
%
%		It implies that $g$ is an increasing function on $[0, \hat{p}_k]$ and a decreasing function on $[\hat{p}_k, \Pmax]$.
%		Thus, $p_k^*=0$ if $\hat{p}_k<0$, $p_k^*=\Pmax$ if $\hat{p}_k\ge \Pmax$, and $p_k^*=\hat{p}_k$ otherwise.
%		From~\eqref{eq:p2hat}, we can derive
%		\begin{gather}
%			\hat{p}_k < 0 \Leftrightarrow \frac{\tilde{w}_k}{\tilde{w}_{\phi_k}} < C_1,\\
%			\hat{p}_k \ge \Pmax \Leftrightarrow \frac{\tilde{w}_k}{\tilde{w}_{\phi_k}} \ge C_2,
%		\end{gather}
%		where $C_1=\frac{\lvert h_{\phi_k}\rvert^2\sigma_k^2}{\lvert h_k\rvert^2\sigma_{\phi_k}^2}$ and $C_2=\frac{\lvert h_{\phi_k} \rvert^2 (\sigma_k^2+\Pmax\lvert h_k \rvert^2)}{\lvert h_k\rvert^2 (\sigma_{\phi_k}^2+\Pmax\lvert h_{\phi_k}\rvert^2)}$.
%		Hence, $p_k^*$ can be recast as
%		\begin{equation}
%			p_k^* = \begin{cases}
%				0, & \text{if } {\tilde{w}_k}/{\tilde{w}_{\phi_k}} < C_1,\\
%				\Pmax, & \text{if } {\tilde{w}_k}/{\tilde{w}_{\phi_k}} \ge C_2,\\
%				\hat{p}_k, & \text{otherwise.}
%			\end{cases}
%		\end{equation}
\end{enumerate}
Note that since $\eta_k < \eta_{\phi_k}$, the right-hand side of \eqref{eq:ineq:C6} is always less than one, i.e., $(\Pmax+\eta_k^2)/(\Pmax+\eta_{\phi_k}^2)<1$.
Hence, by combining the results for the above three cases, we can derive $p_k^*$ as in \eqref{eq:pkstar}.
Lastly, we can derive $p_{\phi_k}^*$ as in \eqref{eq:pphistar} since $p_{\phi_k}+p_k=\Pmax$. \hfill \qedsymbol
%
%
%From the fact that $\lvert h_{\phi_k} \rvert^2 / \sigma_{\phi_k}^2 \le \lvert h_k \rvert^2 / \sigma_k^2$, we can easily derive that $C_1 \le C_2 \le 1$.
%Then, by combining the results of the above three cases, we can conclude that $p_k^*=0$ if ${\tilde{w}_k}/{\tilde{w}_{\phi_k}} < C_1$, $p_k^*=\Pmax$ if ${\tilde{w}_k}/{\tilde{w}_{\phi_k}} \ge C_2$, and $p_k^*=\hat{p}_k$ otherwise.
%Since $p_{\phi_k} + p_k = \Pmax$, $p_{\phi_k}^* = \Pmax - p_k^*$. \hfill \qedsymbol
%
%%\textit{Case 1:} Suppose $w_1 < w_2$. Then, $g'(p_2)\ge0$ if $p_2\ge\hat{p_2}$ and $g'(p_2)<0$ otherwise, and $\hat{p_2}<0$. Thus, $p_2^*=\Pmax$.\\
%%\textit{Case 2:} Suppose $w_1 = w_2$, Then, $g'(p_2)<0$ for any $p_2 \in [0, \Pmax]$. Thus, $p_2^*=0$.\\
%%\textit{Case 3:} Suppose $w_1 > w_2$, Then, $g'(p_2)\ge0$ if $p_2\le\hat{p_2}$ and $g'(p_2)<0$ otherwise. Thus, $p_2^*=\Pmax$ if $\hat{p_2}\ge \Pmax$, $p_2^*=0$ if $\hat{p_2}<0$, and $p_2^*=\hat{p_2}$ otherwise.\\

\bibliographystyle{IEEEtran}
\bibliography{IEEEabrv,vtc2021}

%\begin{thebibliography}{00}
%\bibitem{b1} G. Eason, B. Noble, and I. N. Sneddon, ``On certain integrals of Lipschitz-Hankel type involving products of Bessel functions,'' Phil. Trans. Roy. Soc. London, vol. A247, pp. 529--551, April 1955.
%\bibitem{b2} J. Clerk Maxwell, A Treatise on Electricity and Magnetism, 3rd ed., vol. 2. Oxford: Clarendon, 1892, pp.68--73.
%\bibitem{b3} I. S. Jacobs and C. P. Bean, ``Fine particles, thin films and exchange anisotropy,'' in Magnetism, vol. III, G. T. Rado and H. Suhl, Eds. New York: Academic, 1963, pp. 271--350.
%\bibitem{b4} K. Elissa, ``Title of paper if known,'' unpublished.
%\bibitem{b5} R. Nicole, ``Title of paper with only first word capitalized,'' J. Name Stand. Abbrev., in press.
%\bibitem{b6} Y. Yorozu, M. Hirano, K. Oka, and Y. Tagawa, ``Electron spectroscopy studies on magneto-optical media and plastic substrate interface,'' IEEE Transl. J. Magn. Japan, vol. 2, pp. 740--741, August 1987 [Digests 9th Annual Conf. Magnetics Japan, p. 301, 1982].
%\bibitem{b7} M. Young, The Technical Writer's Handbook. Mill Valley, CA: University Science, 1989.
%\end{thebibliography}
%\vspace{12pt}
%\color{red}
%IEEE conference templates contain guidance text for composing and formatting conference papers. Please ensure that all template text is removed from your conference paper prior to submission to the conference. Failure to remove the template text from your paper may result in your paper not being published.

\end{document}